\newcommand{\boldPhi}{\boldsymbol{\Phi}}
\newcommand{\tr}{\text{trace}}
\newcommand{\Find}{\text{Find}}
\providecommand{\leftsquigarrow}{%
  \mathrel{\mathpalette\reflect@squig\relax}%
}
\newcommand{\reflect@squig}[2]{%
  \reflectbox{$\m@th#1\rightsquigarrow$}%
}
\newenvironment{proof}{\noindent \textbf{Proof:}}{\hfill$\square$}
\newcommand{\newalgname}[1]{%
  \renewcommand{\ALG@name}{#1}%
}
\def\centerhack#1{\hbox to 0pt{\hss\footnotesize #1\hss}}
\def\centerhackn#1{\hbox to 0pt{\hss #1\hss}}
\def\dchack#1{\vbox to 0pt{\vss{\hbox to 0pt{\hss#1\hss}}\vss}}
\newcounter{subeqn} %
\newcounter{mysub}
\newtheorem{rem}{Remark}
\newtheorem{prop}{Proposition}
\def\BibTeX{{\rm B\kern-.05em{\sc i\kern-.025em b}\kern-.08em
    T\kern-.1667em\lower.7ex\hbox{E}\kern-.125emX}}
\begin{document}

\title{Reconfigurable Intelligent Surface Aided Power Control for Physical-Layer Broadcasting\\
}

\author{Huimei Han, Jun Zhao, Wenchao Zhai, Zehui Xiong, Dusit Niyato, {\it{Fellow, IEEE}}, Marco Di Renzo, {\it{Fellow, IEEE}},  Quoc-Viet Pham, Weidang Lu, and Kwok-Yan Lam, {\it{Senior Member, IEEE}}

\thanks{
Huimei Han  and  Weidang Lu are with  Zhejiang University of Technology, Hangzhou, Zhejiang, P.R. China.
Jun Zhao, Zehui Xiong,  Dusit Niyato and Kwok-Yan Lam are  with School of Computer Science and Engineering, Nanyang Technological University, Singapore. Wenchao Zhai is with the China Jiliang University, Hangzhou, Zhejiang, P.R. China. Marco Di Renzo is  with Laboratory of Signals and Systems of Paris-Saclay University - CNRS and CentraleSup\'{e}lec,  France.  Quoc-Viet Pham is with Research Institute of Computer, Information and Communication, Pusan National University,  South Korea.  (Emails: \{hmhan1215,\,luweid\}@zjut.edu.cn, \{junzhao,\,zxiong002,\,dniyato,\,kwokyan.lam\}@ntu.edu.sg, zhaiwenchao@cjlu.edu.cn,
 marco.direnzo@centralesupelec.fr, vietpq@pusan.ac.kr). Part of the work has been presented in
IEEE ICC 2020
~{{\cite{han2019intelligent}}} .}

}


\maketitle

 \pagestyle{plain} \thispagestyle{plain}

\begin{abstract}
Reconfigurable intelligent surface (RIS), a recently introduced technology for  future wireless communication systems, enhances the spectral and energy
efficiency by intelligently adjusting the propagation conditions between a base station (BS) and  mobile
equipments (MEs). An RIS consists of many low-cost passive reflecting elements to improve the quality of the received signal.  In this paper,  we study the problem of power control at the BS for the RIS aided
physical-layer broadcasting. Our goal is to minimize
the transmit power at the BS by jointly designing the transmit beamforming at the BS and the phase
shifts of the passive elements at the RIS.
 Furthermore, to help validate 
the proposed optimization methods,
we derive lower bounds to quantify the average transmit power at the BS as a function of the number of MEs, the number of RIS elements, and the number of antennas at the BS.
The simulation results  demonstrated that the average transmit power at the BS  is
close to the lower bound in an  RIS aided system, and is significantly lower than the average transmit power in conventional schemes
without the RIS.
\end{abstract}

\begin{IEEEkeywords}
Reconfigurable intelligent surface, power control, quality of service,  wireless communications.
\end{IEEEkeywords}

\section{Introduction}


Fifth-generation (5G) communications achieve great improvement in  spectral efficiency  by utilizing various  advanced technologies,
such as massive multiple-input multiple-output (MIMO) communications,  non-orthogonal multiple access transmission, millimeter (mm)-wave  communications, and ultra-dense Heterogeneous Networks.
However, the high energy consumption of 5G communications is a critical issue~\cite{HuangTWC}.
To improve the spectral efficiency  and reduce the energy consumption simultaneously, researchers are exploring new ideas for future wireless systems beyond 5G\mbox{~\cite{RISimpleDinew,HuangAZYD19,HuangAZDY18,add1}}.  {\color{black}Among solutions, reconfigurable intelligent surface (RIS) 
can improve coverage capability, spectral efficiency, and quality of the reflected signal by controlling the phase of the incident signal in a passive way. 
Such good features make RIS
a new technology for the 6G communications, and RIS  has attracted much attention recently
~\cite{wu2018intelligentfull,wu2019beamformingICASSP,Los,RISDi,H-MIMODi,DiRe,6G}.}
An RIS is a planar array  consisting of  many reflecting and nearly passive   units, 
which intelligently and dynamically  adjusts the propagation conditions to improve the communication quality between the base station (BS) and mobile equipments (MEs). Since each RIS unit  reflects the signal in a passive way instead of transmitting/receiving signal in an active way,  the energy consumption is  low. 
{\color{red} Furthermore, the
RIS has  the features of  lightweight, low profile, and  conformal geometry, making it easy to  mount  or remove the RIS  from objects. 
If some objects (i.e., UAVs, drones, and buildings) are located between the BS and the RIS, the RIS  can be installed on a high rise building
 or in the air
  to create a line-of-sight (LoS) link between the BS and the RIS~\cite{survey,Los,IRSlocation,1-5}.
}
Indeed, passive reflecting surfaces have been  used in other communication systems, such as radar systems and satellite communication systems, but
 have received less attention in mobile wireless communication systems up until now.

{\color{black}Broadcast traffic can be used for the BS to broadcast system public control information to MEs. System information broadcasting in the communication system provides  information to facilitate MEs to establish wireless connections~\cite{broad}. Broadcast traffic can also be used for news feed, video-conference, or movie broadcast~\cite{complexity}.}
To address  the problem of power control  at the BS for physical layer
broadcasting with quality of service (QoS)  constraints  in  RIS aided networks, we propose to  employ  two alternating optimization algorithms in order to jointly design the transmit beamforming  at  the BS and the phase shifts of the reflecting units at  the  RIS.
We also present  computational  complexity analysis of the  proposed alternating optimization algorithms. 
Furthermore, to validate the performance  of the optimization algorithms,  we derive  two lower bounds for  the average transmit power at the BS.
 Simulation results  show that  the average transmit power at the BS is  close to   the lower bounds, and   is much lower  than the average transmit power of communication systems without the  RIS. Specifically, the main features and contributions of this paper can be summarized as follows.

 \begin{itemize}

\item We formulate a BS power control optimization problem for physical layer broadcasting under QoS constraints in the RIS aided network. We  propose two alternating optimization algorithms for solving it.
     Specifically, we propose an alternating optimization algorithm  based on  semidefinite relaxation (SDR) technique to obtain the minimum transmit power. To reduce  computational complexity and improve the performance,  we also propose an  alternating optimization algorithm based on  successive convex approximation (SCA) method.


\item We introduce two  lower bounds providing information on the average transmit power at the BS as a function of the number of MEs, the number of RIS units, and the number of  antennas at the BS, considering a full-rank LoS channel between the BS and the RIS. These lower bounds are employed to analyze the effectiveness of the  proposed optimization algorithms. In particular, we propose an analytical and a semi-analytical lower bounds. The first bound is easier to compute but the second one is tighter.
     Simulation results demonstrate that the average transmit power at the BS   is  close to  the semi-analytical lower bound.

\item With the aid of numerical simulations, we compare the performance of the proposed algorithms against a two-stage algorithm and conventional schemes without RISs.
\end{itemize}

As opposed to the conference version in~\cite{han2019intelligent}, we make the following extensions.
\begin{itemize}

 \item {\color{black}In contrast to the BS-to-RIS rank-one  LoS channel scenario in~\cite{han2019intelligent}, we consider a full rank LoS channel
 to benefit from the RIS in the broadcasting multi-MEs setting~\cite{Los}.}

  \item In~\cite{han2019intelligent}, we briefly introduced   an alternating optimization algorithm  based on the SDR technique to obtain the minimum transmit power. In this paper, we not only discuss the  alternating optimization algorithm  based on the SDR technique in detail, but also  propose an alternating optimization algorithm  based on the SCA method to reduce the computational complexity and improve the performance, by utilizing the characteristic of no multi-user interference in the broadcast scenario.

 \item {\color{black}In~\cite{han2019intelligent}, we introduced only   an analytical lower bound for the average transmit power with the RIS. In this paper, for benchmarking purposes, we also derive analytical lower bounds for the transmit power without the RIS and with random phase shifts at the RIS. Furthermore, to get a  tighter lower  bound, we introduce a semi-analytical lower bound.}
\end{itemize}

\textbf{Organization.}
The remainder of this paper is organized as follows. In Section~\ref{sec:Related-work}, we describe existing research works related to the present paper.
Section~\ref{sec:systemmodel}  introduces the system model and formulates the power control optimization  problem for  physical-layer broadcasting under QoS constraints. In Section~\ref{sec:Optimization}, we introduce the proposed algorithms to solve the problem. Two lower bounds for the minimum transmit power are  derived in Section~\ref{sec:analysis}. Sections~\ref{sec:simulation} and~\ref{sec:conclution} provide numerical results and conclude this paper, respectively.

\textbf{Notation.} The notations utilized throughout this paper are described in  Table~\ref{tab:notation}.

\begin{table}[htbp]
\scriptsize
\centering
 \caption{notations.}\label{tab:notation}
   \begin{tabular}{|c|c|}
\hline
\linespread{2}
Notations & Description\\
\hline
Italic letters& Scalars\\
\hline
 Boldface
lower-case & Vectors\\
\hline
Boldface
upper-case letters& Matrices\\
\hline
$(\cdot)^T$ and $(\cdot)^H$ & The transpose and conjugate transpose of a matrix \\
\hline
$\boldsymbol{D}_{i,q}$& The element in the $i^{th}$ row and $q^{th}$  column of a matrix $\boldsymbol{D}$\\
\hline
$\boldsymbol{x}_{i}$& The $i^{th}$ element of a vector $\boldsymbol{x}$\\
\hline
$\mathbb{C}$& The set of all complex numbers \\
\hline
$\mathcal{C}\mathcal{N}( \mu  , \sigma^2 )$& A circularly-symmetric complex Gaussian distribution with mean $\mu$  and variance $\sigma^2$\\
\hline
$|| \cdot||$& The Euclidean norm of a vector\\
\hline
$| \cdot |$& The cardinality of a set\\
\hline
$\text{diag}(\boldsymbol{x})$& A diagonal matrix with the element in the $i^{th}$ row and $i^{th}$  column being the $i^{th}$ element in $\boldsymbol{x}$\\
\hline
\text{arg}($\boldsymbol{x}$)& A phase vector\\
\hline
$\mathbb{E}(\cdot)$ &The expectation operator\\
\hline
$\textup{Var}(\cdot)$& The variance operator\\
\hline
$\boldsymbol{M}^{-1}$  and
$\boldsymbol{M}\succeq 0$ & The inverse  of a square  matrix $\boldsymbol{M}$ and positive \mbox{semi-definiteness}\\
\hline
$\bm{{I}}$& The identity matrix\\
\hline
$\Re(c)$, $\Im(c)$ and  $\varphi_c$& The real part, imaginary part and angle of a complex number $c$\\
\hline
\end{tabular}
\end{table}

\section{Related works}\label{sec:Related-work}

In this section, we discuss the existing research works that are related to the present paper.

 \textbf{Power control in  RIS aided communications}. Power control has been studied in several recent papers on  RIS aided communications.  Wu and Zhang  minimized the transmit power at the BS by optimizing the transmit beamforming at the
BS and the phase shifts of the RIS, subject to   signal-to-interference-plus-noise  ratio  (SINR) constraints at  MEs~\cite{wu2018intelligentfull}. They extended the study in~\cite{wu2018intelligentfull} to account for discrete phase shifts in~\cite{wu2019beamformingICASSP} and~\cite{wu2019beamforming}. Zhou~\textit{et~al.} extended~\cite{wu2018intelligentfull} to consider power control in  RIS aided communications under   imperfect channel state information (CSI)~\cite{2-7}. The problem of  power control when an RIS is used to achieve secure communications was investigated in~\cite{chu2019intelligent,feng2019secure}. Recent studies~\cite{li2019joint,fu2019intelligent} tackled transmit power minimization in  RIS aided networks with \mbox{non-orthogonal} multiple-access (NOMA).  Different from~\cite{wu2018intelligentfull,wu2019beamformingICASSP,wu2019beamforming,han2019intelligent,zhao2019optimizations} that addressed only information transfer, Wu and Zhang considered power control for simultaneous wireless information and power transfer in  RIS aided systems~\cite{wu2019joint}.

\textbf{Broadcast/multicast traffic in  RIS aided communications}.
The above mentioned papers consider the unicast setting, where the BS sends different data to different MEs. We are only aware of a few recent studies on broadcast/multicast traffic in the context of RIS aided communications. The multicast setting categorizes MEs into a number of groups, where MEs in the same group receive the same data from the BS. Zhou~\textit{et~al.} maximized the sum rate of groups in the multicast setting~\cite{Multi}. In the broadcast case, Du~\textit{et~al.} maximized the information rate under  transmit power constraints, where the system model only involves the indirect channels between the BS and MEs, and analyzed the  asymptotic  growth of the capacity~\cite{du2019multiple}. In addition, Nadeem~\textit{et~al.}
 showed  that RISs achieve better performance  when the BS-RIS LoS channel is of full rank for a multi-MEs setting~\cite{Los}.
Both~\cite{Multi} and~\cite{du2019multiple} do not consider  power control and do not  assume a full-rank LoS channel model between the BS  and the RIS, as addressed in this paper. In addition, an earlier draft~\cite{zhao2019optimizations} (not submitted) of our   paper summarized problems related to power control under QoS in the unicast, multicast, and broadcast settings, but only SDR-based algorithms were briefly presented without any analysis and simulation validation.

\textbf{Power control in traditional  communication systems without RIS}. In the absence of RIS, downlink power control under QoS for the broadcast setting was studied in the seminal work of Sidiropoulos~\textit{et~al.}~\cite{sidiropoulos2006transmit}.  The problem was also shown to be NP-hard~\cite{sidiropoulos2006transmit}. Karipidis~\textit{et~al.} extended~\cite{sidiropoulos2006transmit} to the multicast scenario~\cite{karipidis2008quality}. Power control for multicast traffic in NOMA systems was addressed in~\cite{choi2015minimum}.

 \textbf{Comparing   RIS with other technologies}.
 The main advantage of  RISs over  existing technologies (such as massive  MIMO communication~\cite{vsMIMO},  \mbox{mmWave} communication~\cite{millimeter}, amplify-and-forward relaying~\cite{risdifferDi}) is   that the RIS comprises only passive elements,  achieves low hardware cost, low energy consumption, and intelligently adjusts   the wireless environment~\cite{wu2018intelligentfull,IRSoverviewWu}.

 \textbf{RIS implementations}.
 To validate  the feasibility RIS aided systems, Xin~\textit{et~al.} implemented a  reflecting array for IEEE 802.11~ad for application to mmWave communications~\cite{IRSimplement}. Tang~\textit{et~al.} experimentally verified  channel models for different cases in  RIS aided  communication \mbox{systems}~\cite{PathmodelDi}. Examples of  RIS aided communication  prototypes were described in~\cite{RISp,RISimpleDi}.

 \textbf{Other studies of RIS}. In addition to the papers discussed above, there are many other RIS studies that address various optimization problems and emerging applications.
 {\color{black} Some machine learning based methods were proposed for the  phase shifts optimization.  Huang~\textit{et~al.} proposed a deep learning (DL) method of reconfiguring RIS online for the complex indoor environment~\cite{HuangAZYD19}. Yang~\textit{et~al.} proposed a win or
learn fast policy hill-climbing learning approach to jointly optimize the anti-jamming power allocation and reflecting beamforming~\cite{IRSML1}.
}
We refer interested readers to a recent surveys~\cite{RISimpleDinew,gong2019towards} and the references therein.

\section{System model and  problem definition}\label{sec:systemmodel}

\subsection{System model}
\begin{figure}[!t]
  \centering
\includegraphics[scale=0.45]{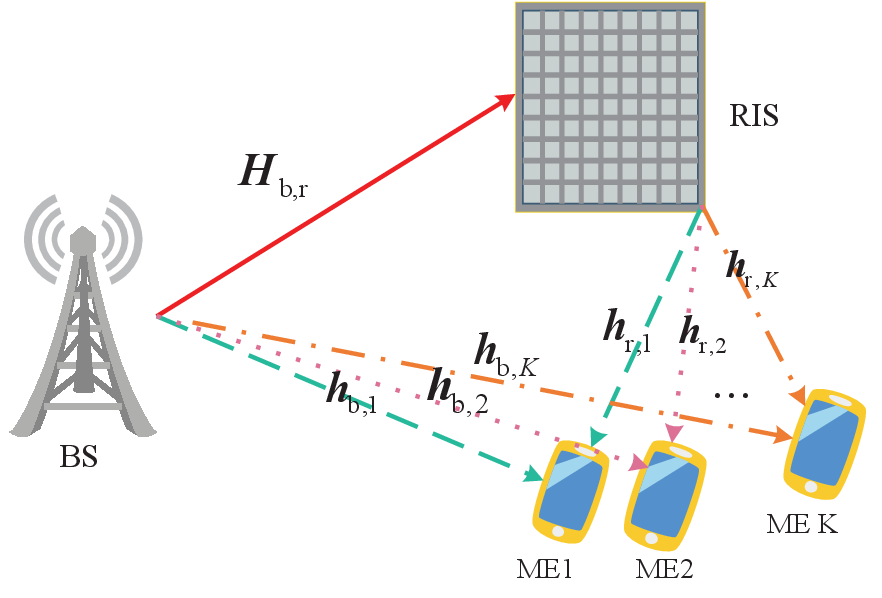}\\
  \caption{An  RIS aided communication system consisting of a base station (BS), multiple mobile equipments (MEs), and an RIS comprising many RIS units.
  \vspace{-10pt}}\label{systemmodel}
\end{figure}

We consider an  RIS aided multi-input single-output (MISO) communication system in a broadcast setting, where there is a  BS with $M$ antennas and an RIS with $N$ reflecting units, and  $K$ single-antenna MEs, as shown in Fig.~\ref{systemmodel}. The BS uses linear  transmit precoding (i.e., beamforming) that is denoted by the vector $\bm{w}\in \mathbb{C}^{M\times1}$. So, the transmitted signal at the BS is $\boldsymbol{x}= \bm{w}{s}$, where ${s}$ is  the broadcasted data signal. The signal $\boldsymbol{x}$   reachs each ME via  both indirect and direct  channels, and the received signal at each ME is the sum of the signals from these two channels. More specifically, for the indirect channel, the transmitted signal $\boldsymbol{x}$  travels from the BS  to the RIS, it is then reflected by the RIS, and  finally travels from the RIS to these $K$ MEs. For the direct channel, the  transmitted signal $\boldsymbol{x}$  travels from  the BS to these $K$ MEs directly.

Let $\boldPhi= \text{diag}(\beta_1 e^{j \theta_1}, \ldots, \beta_n e^{j \theta_n} , \ldots, \beta_N e^{j \theta_N})$ denote  the reflection coefficients matrix at the RIS, where  $\beta_n$ and $\theta_n$ denote the amplitude and  phase shift, respectively, and $j$ is the unit imaginary number. Similar to other recent studies~\cite{wu2018intelligentfull,wu2019beamformingICASSP,Los}, we assume that the RIS changes only the phase of the reflected signal, i.e., $\theta_n \in [0,2\pi)$ and $\beta_n=1${\color{black}\footnote{{\color{black}
Since each element of the RIS is usually designed to maximize the
signal reflection, we set the  amplitudes of all RIS elements to 1 for simplicity in our papers~\cite{wu2018intelligentfull}. When the reflection amplitude depends
on the phase shift, the reflection design proposed in this paper is no longer optimal in general and will cause
performance degradation~\cite{add3}, and we will do some research on  the coupling of reflection amplitude and phase shift   in the future. 
}}}. Let $\bm{H}_{\text{b},\text{r}} \in \mathbb{C}^{N \times M} $, $\boldsymbol{h}_{\text{r},i}^H \in \mathbb{C}^{1 \times N} $,  and $\boldsymbol{h}_{\text{b},i}^H\in \mathbb{C}^{1 \times M} $ be the BS-RIS channel, RIS-$i^{th}$ ME channel, and BS-$i^{th}$ ME channel, respectively (the subscripts b, r, and $i$ represent the BS, the RIS, and the $i^{th}$ ME, respectively). Then, the received signal at ME $i$ is given by
\begin{equation}\label{receive-data}
\setlength{\abovedisplayskip}{2pt plus 1pt minus 1pt}
\setlength{\belowdisplayskip}{2pt plus 1pt minus 1pt}
\setlength\abovedisplayshortskip{2pt plus 1pt minus 1pt}
\setlength\belowdisplayshortskip{2pt plus 1pt minus 1pt}
{y_i} = (\boldsymbol{h}_{\text{r},i}^H\boldPhi {\bm{H}_{\text{b},\text{r}}} + \boldsymbol{h}_{\text{b},i}^H)\bm{w}{s}  + {n_i},~i = 1,\ldots,K,
\end{equation}
where  ${n_i}\sim \mathcal{C}\mathcal{N}(0, \sigma^2_i) $ denotes the  additive white  Gaussian noise at the $i^{th}$ ME.

We assume that the broadcasted  data signal ${s}$ is  normalized to have unit power. Then, the \mbox{signal-to-noise}  ratio  (SNR) at the $i^{th}$ ME  can be written as
\begin{equation}\label{SINR}
\text{SNR}_i   = \frac{|\bm{h}^H_{i}(\boldPhi)\bm{w}|^2}{ \sigma^2_i},~i = 1,\ldots,K,
\end{equation}
where $\boldsymbol{h}_i^H(\boldPhi)= \bm{h}^H_{\text{r},i}\boldPhi \bm{H}_{\text{b},\text{r}}+\bm{h}^H_{\text{b},i}$ denotes the overall downlink channel from the BS to  ME $i$.
\subsection{Problem definition}
Our research problem on power control is to minimize  the  transmitted power at the BS for  broadcasting under QoS constraints. Note that the transmitted power at the BS is $\|\bm{w}\|^2$, and  the QoS of the $i^{th}$ ME can  be characterized by its SNR, which is required to be at least  $\gamma_i$.
Then, the optimization problem can be formulated as follows. 
\begin{subequations} \label{eq-Prob-P1}
\setlength{\abovedisplayskip}{2pt plus 1pt minus 1pt}
\setlength{\belowdisplayskip}{2pt plus 1pt minus 1pt}
\setlength\abovedisplayshortskip{2pt plus 1pt minus 1pt}
\setlength\belowdisplayshortskip{2pt plus 1pt minus 1pt}
\begin{alignat}{1}
 \text{(P1):~}
\min_{\bm{w}, \boldPhi}~ &\|\bm{w}\|^2 \label{OptProb-power-broadcast-obj}   \\
~\mathrm{s.t.}~&\frac{|\bm{h}^H_{i}(\boldPhi)\bm{w}  |^2}{\sigma^2_i}\geq \gamma_i, ~i =1,\ldots,K, \label{OptProb-power-broadcast-SINR-constraint11}\\
&0\le {\theta _n} < 2\pi , \ n=1,\ldots,N. \label{OptProb-power-broadcast-Phi-constraint}
\end{alignat}
\end{subequations}

{\color{black}Let us define the normalized channel vector $ {{\hat{\bm{h}}_{i}(\boldPhi)}}:=\frac{\bm{h}_{i}(\boldPhi)}{\sigma_i\sqrt{\gamma_i}}$~\cite{complexity}. Then, the constraint (\ref{OptProb-power-broadcast-SINR-constraint11}) can be transformed as ${|{{\hat{\bm{h}}_{i}(\boldPhi)}}\bm{w}|^2}\geq 1$.
 This indicates that, though $\gamma_i$ is different for different MEs, $\gamma_i$ can be coupled into the channel vector of each ME, thereby all MEs having the same threshold. In our paper, similar to other recent studies~\cite{wu2018intelligentfull,wu2019beamformingICASSP,samegama},  we
consider that all MEs have the same threshold  and  noise variance  for simplicity (i.e., $\gamma_i=\gamma$, $\sigma_i^2=\sigma^2,~\forall~i =1,\ldots,K$).}


 \section{Alternating optimization algorithms } \label{sec:Optimization}
Problem (P1) is NP-hard since even the setup without RIS is NP-hard according to~\cite{sidiropoulos2006transmit}. To  solve  Problem~(P1), we utilize  the alternating optimization to perform the following operations iteratively: optimize $\bm{w}$ given $\boldPhi$, and optimize $\boldPhi$ given $\bm{w}$. Under this framework, we  propose two algorithms based on
SDR and
SCA, respectively.

\subsection{Alternating optimization algorithm  based on  SDR  } \label{sec:Optimization-SDR}
\begin{algorithm}[h]
\small
\caption{Alternating optimization based on SDR to find $\bm{w}$ and $\boldsymbol{\Phi}$ for Problem~(P1).} \label{Alg-Problem-P1}
\begin{algorithmic}[1]
\setlength{\belowcaptionskip}{-2cm}
\setlength{\abovecaptionskip}{-2cm}
\STATE Initialize  $\boldsymbol{\Phi}$ as $\boldsymbol{\Phi}^{(0)}:= \text{diag}( e^{j \theta_1^{(0)}}, \ldots,  e^{j \theta_N^{(0)}})$, where $\theta_n^{(0)}$  $(n =1,2,\ldots, N)$ is chosen uniformly at random from $[0, 2\pi)$;
\STATE Initialize the iteration number $q \leftarrow 1$;
\WHILE{1}
\item[] \COMMENT{\textit{Comment: Optimizing $\bm{w}$ given $\boldPhi$:}}
\STATE {Given \hspace{-1pt}$\boldPhi$ as \hspace{-1pt}$\boldsymbol{\Phi}^{(q-1)}$, solve Problem (P3) to obtain \hspace{-1pt}$\bm{w}^{(q)}$\hspace{-1pt};\hspace{-1pt}}
\STATE \mbox{Compute the object function value $P_t^{(q)} \leftarrow \|\bm{w}^{(q)}\|^2$};
\IF{$1-\frac{P_t^{(q)}}{P_t^{(q-1)}}\le \varepsilon $}
\STATE \textbf{break}; 
\COMMENT{\textit{Comment: $\varepsilon$  controls the number of executed iterations before termination. The algorithm terminates if the relative difference between the transmit power obtained during the $q^{th}$ iteration and the $(q-1)^{th}$ iteration is no greater than $\varepsilon$.}} \label{Alg-Problem-P1-break1}
\ENDIF
\item[] \COMMENT{\textit{Comment: Finding $\boldPhi$ given $\bm{w}$:}}
\STATE Given $\bm{w}$ as $\bm{w}^{(q)}$, solve Problem~(P7) to obtain $\boldPhi^{(q)}$;
\IF{Problem~(P7) is infeasible}
\STATE \textbf{break};  \label{Alg-Problem-P1-break2}
\ENDIF
\ENDWHILE
\end{algorithmic}
\end{algorithm}
 The SDR-based alternating optimization is shown in Algorithm \ref{Alg-Problem-P1}.  The details of the $q^{th}$ iteration are described as follows.

\textbf{Optimizing $\bm{w}$ given $\boldPhi^{(q-1)}$.}
Given $\boldPhi^{(q-1)}$, 
Problem~(P1) reduces to a conventional power control problem under QoS in the downlink broadcast channel in the absence of the RIS:
\begin{subequations} \label{eq-Prob-P1a}
\setlength{\abovedisplayskip}{2pt plus 1pt minus 1pt}
\setlength{\belowdisplayskip}{2pt plus 1pt minus 1pt}
\setlength\abovedisplayshortskip{2pt plus 1pt minus 1pt}
\setlength\belowdisplayshortskip{2pt plus 1pt minus 1pt}
\begin{alignat}{2}
 \text{(P2):~}
\min_{\bm{w}}~ &\|\bm{w}\|^2 \label{OptProb-power-broadcast-obj}   \\
~\mathrm{s.t.}~&\frac{|\bm{h}^H_{i}(\boldPhi^{(q-1)})\bm{w}  |^2}{\sigma^2}\geq \gamma, ~i =1,\ldots,K. \label{OptProb-power-broadcast-SINR-constraintp2}
\end{alignat}
\end{subequations}

Problem (P2) is \mbox{non-convex} because of the \mbox{non-convex} constraints  ~({\ref{OptProb-power-broadcast-SINR-constraintp2}}). Define  $\bm{X}:=\bm{w}\bm{w}^H$ and $\bm{H}_{i}(\boldPhi^{(q-1)}):=\bm{h}_{i}(\boldPhi^{(q-1)})\bm{h}_{i}(\boldPhi^{(q-1)})^H$. Then,  Problem (P2) can be rewritten as~\cite{SDP}
\begin{subequations}
\begin{alignat}{1}
\setlength{\abovedisplayskip}{4pt plus 1pt minus 1pt}
\setlength{\belowdisplayskip}{4pt plus 1pt minus 1pt}
\setlength\abovedisplayshortskip{4pt plus 1pt minus 1pt}
\setlength\belowdisplayshortskip{4pt plus 1pt minus 1pt}
 \text{(P3):~}
\min_{\bm{X}} ~~~ &\tr(\bm{X})  \\
~\mathrm{s.t.}~&\tr(\bm{X} \bm{H}_{i}(\boldPhi^{q-1}))\geq \gamma\sigma^2, ~ i =1,\ldots,K, \label{OptProb-power-broadcast-SINR-constraint}\\
& \boldsymbol{\bm{X}} \succeq 0,~ \text{rank}(\boldsymbol{\bm{X}})= 1.\label{OptProb-power-broadcast-rankone-constraint}
\end{alignat}
\end{subequations}
We employ  SDR to drop the \mbox{non-convex} rank-one constraint in ({\ref{OptProb-power-broadcast-rankone-constraint}}). Therefore, Problem (P3) reduces to a semi-definite program  (SDP), and  we can utilize conventional optimization software (e.g., CVX~\cite{CVX}) to solve it. Generally, SDR may not produce a rank-one solution to Problem (P3). Thus, once $\bm{X}$ is available, the Gaussian randomization method~\cite{Gaussianrandom} is usually applied to obtain rank-one
 solutions to Problem (P3). Specifically, by utilizing the eigenvalue decomposition, $\bm{X}$ can be rewritten as $\bm{X}=\bm{U}\bm{\Lambda}\bm{U^{H}}$  where $\bm{U}$ and $\bm{\Lambda}$ are a unitary matrix and a diagonal
matrix, respectively. Then,  a suboptimal solution to Problem (P3) is  given by $\bm{X}_s=\bm{U}\bm{\Lambda}^{1/2}\bm{r}$, where $\bm{r}$ is a random vector whose distribution is  $\mathcal{C}\mathcal{N}( 0, \bm{I}_M )$ with $\bm{I}_M$ being the identity matrix.
When utilizing the  Gaussian randomization, we can obtain multiple candidate solutions to  Problem (P3), {\color{red} and we  regard $\bm{w^{(q-1)}}$ as a candidate solution to  Problem (P3) to ensure the convergence of the proposed SDR-based algorithm as shown in Proposition 1.} Then, we select the candidate solution with the minimum transmit  power at the BS as the value of $\bm{w}$ during the $q^{th}$ iteration,  denoted by $\bm{w}^{(q)}$. 

\textbf{Finding $\boldPhi$ given $\bm{w}^{(q)}$.} Given $\bm{w}^{(q)}$, Problem~(P1) boils down to the following feasibility check problem of finding $\boldPhi$:
\begin{subequations} \label{eq-Prob-P1c}
\setlength{\abovedisplayskip}{2pt plus 1pt minus 1pt}
\setlength{\belowdisplayskip}{2pt plus 1pt minus 1pt}
\setlength\abovedisplayshortskip{2pt plus 1pt minus 1pt}
\setlength\belowdisplayshortskip{2pt plus 1pt minus 1pt}
\begin{alignat}{2}
\text{(P4)}: \Find~~~& \boldPhi \\
\mathrm{s.t.}~~~~~&   \frac{|\bm{h}^H_{i}(\boldPhi)\bm{w}^{(q)} |^2}{\sigma^2}\geq \gamma, ~ i =1,\ldots,K, \label{eq:prob-P2-SINR-constraint}\\
 &0\le {\theta _n} < 2\pi , \ n=1,\ldots,N. \label{eq:prob-P1a-rank-constraint}
\end{alignat}
\end{subequations}
Define  $\bm{\phi}= [ e^{j \theta_1}, \ldots,  e^{j \theta_N}]^H$, $\bm{a}_{i}=\text{diag}(\bm{h}^H_{\text{r},i})\bm{H}_{\text{b},\text{r}}\bm{w}^{(q)}$, and $b_{i}=\bm{h}^H_{\text{b},i}\bm{w}^{(q)}.$  Then, Problem (P4) can be rewritten as
\begin{subequations} \label{eq-Prob-P1d}
\setlength{\abovedisplayskip}{2pt plus 1pt minus 1pt}
\setlength{\belowdisplayskip}{2pt plus 1pt minus 1pt}
\setlength\abovedisplayshortskip{2pt plus 1pt minus 1pt}
\setlength\belowdisplayshortskip{2pt plus 1pt minus 1pt}
\begin{alignat}{2}
\text{(P5)}: \Find~~~& \bm{\phi} \\
\mathrm{s.t.}~~~~~&    \begin{bmatrix}
\bm{\phi}^H ,~
1
\end{bmatrix} \bm{A}_i \begin{bmatrix}
\bm{\phi} \\
1
\end{bmatrix} + {b}_{i}{b}_{i}^H  \geq \gamma\sigma^2,~ i =1,\ldots,K, \label{eq:prob-quadratic-constraint}  \\
 &|{\phi}_n|=1 , \ n=1,\ldots,N, \label{eq:prob-P1d-rank-constraint}
\end{alignat}
\end{subequations}
where $ \bm{A}_i= \begin{bmatrix}
\bm{a}_{i}\bm{a}_{i}^H, & \bm{a}_{i}{b}_{i}^H \\
{b}_{i}\bm{a}_{i}^H, & 0
\end{bmatrix}, i =1,\ldots,K.$
Note that, since the constraints in~(\ref{eq:prob-P1d-rank-constraint}) are \mbox{non-convex},
Problem (P5) is a \mbox{non-convex} optimization problem.  Let us introduce an auxiliary variable
$t$ satisfying $|t|=1$, and define $\bm{v}:= t  \begin{bmatrix}
\bm{\phi}  \\
1
\end{bmatrix}=  \begin{bmatrix}
\bm{\phi}t \\
t
\end{bmatrix},  \boldsymbol{V}:= \boldsymbol{v}  \boldsymbol{v}^H.$
 Problem~(P5)   can  then be transformed into the following problem
\begin{subequations} \label{eq-Prob-P6}
\setlength{\abovedisplayskip}{2pt plus 1pt minus 1pt}
\setlength{\belowdisplayskip}{2pt plus 1pt minus 1pt}
\setlength\abovedisplayshortskip{2pt plus 1pt minus 1pt}
\setlength\belowdisplayshortskip{2pt plus 1pt minus 1pt}
\begin{alignat}{2}
\text{(P6)}: \Find~~~& {\boldsymbol{V}} \\
\mathrm{s.t.}~~~~~&  \tr(\bm{A}_i\boldsymbol{V} )  + {b}_{i} {b}_{i}^H \geq   \gamma \sigma^2,~ i =1,\ldots,K, \label{eq:prob-P1e-prime-SINR-constraint}\\
 &\boldsymbol{V} \succeq 0,~\text{rank}(\boldsymbol{V})=1, \boldsymbol{V}_{n,n}  = 1, ~  n=1,\ldots,N+1.
\end{alignat}
\end{subequations}

Similar  to Problem (P3), SDR is utilized to drop the \mbox{non-convex} rank-one   constraint for $\boldsymbol{V}$.
To accelerate the  optimization process, the variables $\alpha_i~(i=1,\ldots,K)$~are further introduced~\cite{wu2018intelligentfull}:
\begin{subequations} \label{eq-Prob-P7}
\setlength{\abovedisplayskip}{2pt plus 1pt minus 1pt}
\setlength{\belowdisplayskip}{2pt plus 1pt minus 1pt}
\setlength\abovedisplayshortskip{2pt plus 1pt minus 1pt}
\setlength\belowdisplayshortskip{2pt plus 1pt minus 1pt}
\begin{alignat}{2}
\text{(P7)}:  \max_{\boldsymbol{V},\boldsymbol{\alpha}} \sum\limits_{i=1}^K   &\alpha_i\\
\mathrm{s.t.}~~~~~&   \tr(\bm{A}_i\boldsymbol{V} )  + {b}_{i} {b}_{i}^H \geq \alpha_i + \gamma \sigma^2,~ i =1,\ldots,K, \label{eq:prob-P1e-prime-SINR-constraint}\\
 &\boldsymbol{V} \succeq 0, \boldsymbol{V}_{n,n}  = 1, \alpha_i \geq 0, ~n=1,\ldots,N+1,~i=1,\ldots,K.
\end{alignat}
\end{subequations}
Similar to Problem (P3),  we can utilize the CVX software~\cite{CVX} to solve Problem (P7). Generally, the SDR may not produce the rank-one solution to Problem (P7). Thus, once $\bm{V}$ is available, the Gaussian randomization method~\cite{Gaussianrandom} is again applied to obtain many candidate rank-one solutions to  Problem (P7), which are denoted by $[\boldPhi^{(q)}_1,\ldots, \boldPhi^{(q)}_c]$ where $c$ is  the number of candidate solutions.
In particular, to select one from these $c$ candidate solutions as the value of $\boldPhi$ during the $q^{th}$ iteration, which is denoted by  $\boldPhi^{{(q)}}$, we use the following procedure.

Let $P_t=\|\bm{\bm{w}}\|^2$ denote  the transmit power.  Given $\boldPhi$, Problem~(P2)~is rewritten as
\begin{equation}\label{min-power}
\setlength{\abovedisplayskip}{2pt plus 1pt minus 1pt}
\setlength{\belowdisplayskip}{2pt plus 1pt minus 1pt}
\setlength\abovedisplayshortskip{2pt plus 1pt minus 1pt}
\setlength\belowdisplayshortskip{2pt plus 1pt minus 1pt}
\begin{aligned}
&\min_{ \overline{\bm{w}}, P_t} P_t\\
&\mathrm{s.t.} \quad \frac{{P_t| \bm{h}_i^{H}(\boldPhi) \overline{\bm{w}} |}^2}{{\sigma}^2 }\geq \gamma, \quad \forall i\in\{1,2,\ldots,K\}.
\end{aligned}
\end{equation}
We define $f:={\min ({  {| \bm{h}_1^{H}(\boldPhi) \overline{\bm{w}} |^2},\ldots, {| \bm{h}_K^{H}(\boldPhi) \overline{\bm{w}}|^2} })},$
 where $\overline{\bm{w}}=\frac{\bm{w}}{\|\bm{w}\|}$ denotes the transmit beamforming direction  at the BS. By direct inspection, the minimum value of $P_t$ is
\begin{equation}\label{proof-inc}
P_t= \frac{\gamma\sigma^2}    {\min (  {| \bm{h}_1^{H}(\boldPhi) \overline{\bm{w}} |^2},\ldots, {| \bm{h}_K^{H}(\boldPhi) \overline{\bm{w}} |^2} )}=\frac{\gamma\sigma^2}{f}.
\end{equation}
 We can see from (\ref{proof-inc}) that, given $\overline{\bm{w}}$, the larger the value of $f$, the smaller the value of $P_t$. Therefore, we select the candidate solution  of $\boldPhi$ that maximizes $f$ as  $\boldPhi^{(q)}$. The maximum value of $f$ is denoted by $f^{(q)}_{\text{o}\boldPhi}$ (the subscript ``\text{o}'' denotes optimized). In order to ensure that the objective value in Problem (P2) is \mbox{non-increasing} over the iterations (as detailed in Proposition 1),  $f^{(q)}_{\text{o}\boldPhi}$ needs to satisfy the condition  $f^{(q)}_{\text{o}\boldPhi} \ge f^{(q)}_{\text{o}\bm{w}}$, where $f^{(q)}_{\text{o}\bm{w}}$ is the value of $f$ after optimizing $\bm{w}$ given $\boldPhi^{(q-1)}$. $f^{(q)}_{\text{o}\bm{w}}$ is obtained by replacing $\boldPhi$ and $\overline{\bm{w}}$  in  (\ref{proof-inc})  with $\boldPhi^{(q-1)}$  and $\overline{\bm{w}}^{(q)}=\frac{\bm{w}^{(q)}}{\|\bm{w}^{(q)}\|}$, respectively. If $f^{(q)}_{\text{o}\boldPhi} < f^{(q)}_{\text{o}\bm{w}}$, the iteration process ends.


{\color{black}\begin{prop}
The proposed SDR-based  alternating algorithm is convergent.
\end{prop}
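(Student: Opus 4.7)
The plan is to prove convergence of the objective sequence $\{P_t^{(q)}\}$ via the monotone convergence theorem: I will show that $P_t^{(q)}$ is monotonically non-increasing in $q$ and bounded below by $0$. Since $P_t^{(q)}=\|\bm{w}^{(q)}\|^2\geq 0$ by construction, only the monotone non-increase requires real work, and I would establish it by examining the two sub-steps of each iteration separately.

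For the $\bm{w}$-update, the key observation is the authors' design choice of including $\bm{w}^{(q-1)}$ (equivalently its rank-one lift $\bm{w}^{(q-1)}(\bm{w}^{(q-1)})^H$) among the Gaussian-randomization candidates for Problem (P3). To leverage this I would first argue by induction that $(\bm{w}^{(q-1)},\boldPhi^{(q-1)})$ is feasible for Problem (P1) at the start of iteration $q$. This follows from the acceptance rule $f^{(q-1)}_{\text{o}\boldPhi}\geq f^{(q-1)}_{\text{o}\bm{w}}$ used at the end of iteration $q-1$: since $\bm{w}^{(q-1)}$ paired with $\boldPhi^{(q-2)}$ meets the SNR threshold, dividing the SNR inequality by $\|\bm{w}^{(q-1)}\|^2$ yields $f^{(q-1)}_{\text{o}\bm{w}}\geq \gamma\sigma^2/P_t^{(q-1)}$, so the accepted $\boldPhi^{(q-1)}$ still satisfies the SNR constraints with the same $\bm{w}^{(q-1)}$. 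Consequently $\bm{w}^{(q-1)}$ is a feasible candidate in the $\bm{w}$-update at iteration $q$, and because the procedure selects the feasible candidate minimizing $\|\cdot\|^2$, I obtain $\|\bm{w}^{(q)}\|^2\leq \|\bm{w}^{(q-1)}\|^2$.

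For the $\boldPhi$-update, I would invoke the identity (\ref{proof-inc}), which rewrites the minimum transmit power under a fixed direction $\overline{\bm{w}}^{(q)}$ as $\gamma\sigma^2/f(\boldPhi,\overline{\bm{w}}^{(q)})$, a monotone decreasing function of $f$. The candidate-selection rule explicitly enforces $f^{(q)}_{\text{o}\boldPhi}\geq f^{(q)}_{\text{o}\bm{w}}$; if the rule fails the loop exits and the claim is trivial, otherwise the power after the $\boldPhi$-update is $\gamma\sigma^2/f^{(q)}_{\text{o}\boldPhi}\leq \gamma\sigma^2/f^{(q)}_{\text{o}\bm{w}}=\|\bm{w}^{(q)}\|^2$, so this sub-step is also non-increasing in the objective.

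Concatenating both sub-steps gives $P_t^{(q+1)}\leq P_t^{(q)}$ for every $q$, and combined with the lower bound $P_t^{(q)}\geq 0$ the monotone convergence theorem yields a limit $P_t^{\star}\geq 0$. The main subtlety, rather than a hard obstacle, is carefully maintaining the feasibility invariant across both sub-steps so that the monotone comparisons are genuinely over feasible points; a secondary caveat worth flagging is that convergence of the objective value does not by itself imply convergence of the iterates $(\bm{w}^{(q)},\boldPhi^{(q)})$, which would require additional compactness and limit-point arguments if a stronger statement is desired.
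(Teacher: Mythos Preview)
Your proposal is correct and follows essentially the same approach as the paper: establish that $P_t^{(q)}$ is non-increasing by exploiting the inclusion of $\bm{w}^{(q-1)}$ among the Gaussian-randomization candidates together with the acceptance rule $f^{(q)}_{\text{o}\boldPhi}\ge f^{(q)}_{\text{o}\bm{w}}$, then invoke boundedness below. Your treatment is slightly more careful in making the feasibility invariant explicit, and your lower bound $P_t^{(q)}\ge 0$ is more direct than the paper's appeal to the SNR constraints; your closing caveat that only the objective, not the iterates, is shown to converge is also a fair observation that the paper does not raise.
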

\begin{proof}
The convergence of the proposed SDR-based  alternating algorithm is guaranteed by the
following two facts. First, the objective value in Problem (P2)
 is non-increasing over iterations. More specifically, if the value of $f$  after optimizing $\bm{w}$ given $\boldPhi$ is \mbox{non-decreasing} over the iterations, then $P_t$ is \mbox{non-increasing} over the iterations. That is,  if $f^{(q+1)}_{\text{o}\bm{w}} \ge f^{(q)}_{\text{o}\bm{w}}$,  we have $P_t^{(q+1)} \le P_t^{(q)}$. The rule of selecting  $\boldPhi^{(q)}$  ensures  $f^{{(q)}}_{\text{o}\boldPhi} \ge f^{(q)}_{\text{o}\bm{w}}$, {\color{red} which also makes $\bm{w^{(q)}}$  a solution to  Problem (P3) in the $(q+1)^{th}$ iteration.  We  regard  $\bm{w^{(q)}}$ as one of the candidate solutions to  Problem (P3), and select the candidate solution with the minimum transmit power at the BS as $ \bm{w^{(q+1)}}$ to ensure that $\bm{w}^{(q+1)}$ does not decrease the transmit power. Then, we have $f^{(q+1)}_{\text{o}\bm{w}} \ge f^{(q)}_{\text{o} \boldPhi}$ based on (\ref{proof-inc}).}
Hence, we have $f^{(q+1)}_{\text{o}\bm{w}} \ge f^{(q)}_{\text{o} \boldPhi}\ge f^{(q)}_{\text{o}\bm{w}} $ and $P_t^{(q+1)} \le P^{(q)}_t$, which means that $P_t$ is \mbox{non-increasing} over the iterations, i.e., the objective value in Problem (P2) is \mbox{non-increasing} over the iterations.
 Second, the optimal
value  is bounded from below due to the SNR
constraints for Problem (P2). Therefore, the proposed SDR-based  alternating algorithm is 
convergent.
\end{proof}}

\subsection{Alternating optimization algorithm based on  SCA} \label{sec1:Optimization2}
Problem (P1) is NP-hard, and  we have utilized the SDR-based alternating optimization algorithm to solve this problem in Section \ref{sec:Optimization-SDR}.  However,  SDR  causes performance loss, and the complexity of solving Problems (P2) and (P4)  is high (a detailed discussion is available in Section~\ref{sec1:Optimization2}).  To reduce the computational complexity and improve the performance, we propose an alternating optimization algorithm based on SCA to solve  Problem (P1), which is shown in Algorithm \ref{Alg-Problem-P1-SCA}. Specifically, we employ the SCA method to solve Problem (P2) and to reduce the computational complexity, and introduce  a variable $g$  to solve  Problem (P4) in order  to maximize the value of $\min (  {| \bm{h}_1^{H}(\boldPhi) {\bm{w}} |^2},\ldots, {| \bm{h}_K^{H}(\boldPhi) {\bm{w}} |^2} )$ when optimizing $\boldPhi$ given $\bm{w}$.

 \begin{algorithm}[!h]
 \small
\caption{Alternating optimization based on SCA to find $\bm{w}$ and $\boldsymbol{\Phi}$ for Problem~(P1).} \label{Alg-Problem-P1-SCA}
\begin{algorithmic}[1]
\setlength{\belowcaptionskip}{-0.7cm}
\setlength{\abovecaptionskip}{-0.1cm}
\STATE Initialize  $\boldsymbol{\Phi}$ as $\boldsymbol{\Phi}^{(0)}:= \text{diag}( e^{j \theta_1^{(0)}}, \ldots,  e^{j \theta_N^{(0)}})$, where $\theta_n^{(0)}$  $(n =1,2,\ldots, N)$ is chosen uniformly at random from $[0, 2\pi)$;
\STATE Initialize the iteration number $q \leftarrow 1$;
\WHILE{1}
\item[] \COMMENT{\textit{Comment: Optimizing $\bm{w}$ given $\boldPhi$:}}
\STATE {Given \hspace{-1pt}$\boldPhi$ as \hspace{-1pt}$\boldsymbol{\Phi}^{(q-1)}$, solve Problem $\text{(P8)}$ to obtain \hspace{-1pt}$\bm{w}^{(q)}$\hspace{-1pt};\hspace{-1pt}}
\STATE \mbox{Compute the object function value $P_t^{(q)} \leftarrow \|\bm{w}^{(q)}\|^2$};
\IF{$1-\frac{P_t^{(q)}}{P_t^{(q-1)}}\le \varepsilon $}
\STATE \textbf{break}; 
\COMMENT{\textit{Comment: $\varepsilon$  controls the number of executed iterations before termination. The algorithm terminates if the relative difference between the transmit power obtained during the $q^{th}$ iteration and the $(q-1)^{th}$ iteration is no greater than $\varepsilon$.}} \label{Alg-Problem-P1-break1}
\ENDIF
\item[] \COMMENT{\textit{Comment: Finding $\boldPhi$ given $\bm{w}$:}}
\STATE Given $\bm{w}$ as $\bm{w}^{(q)}$, solve Problem~(P9) to obtain $\boldPhi^{(q)}$;
\IF{Problem~(P9) is infeasible}
\STATE \textbf{break};  \label{Alg-Problem-P1-break2}
\ENDIF
\ENDWHILE
\end{algorithmic}
\end{algorithm}
\textbf{Optimizing $\bm{w}$ given $\boldPhi^{(q-1)}$.}
Given $\boldPhi^{(q-1)}$ obtained during the $(q-1)^{th}$ iteration,  Problem~(P1) becomes the conventional power control Problem~(P2).  Problem (P2) is \mbox{non-convex} because of the \mbox{non-convex} constraints   in  Inequalities~({\ref{OptProb-power-broadcast-SINR-constraintp2}}), and Section~\ref{sec:Optimization-SDR} utilizes   SDR to solve Problem~(P2). To reduce the computation complexity, we utilize the SCA-based method in~\cite{SLA}  to solve Problem~(P2). Specifically, Problem~(P2) is equivalent to
\begin{subequations}
\setlength{\abovedisplayskip}{2pt plus 1pt minus 1pt}
\setlength{\belowdisplayskip}{2pt plus 1pt minus 1pt}
\setlength\abovedisplayshortskip{2pt plus 1pt minus 1pt}
\setlength\belowdisplayshortskip{2pt plus  1pt minus 1pt}
\begin{alignat}{1}
 \text{(P8):~}
&\mathop {\min }\limits_{\bm{w},\{ {x_i},{y_i},\forall i\} }  ~~~ \|\bm{w}\|^2  \\
~\mathrm{s.t.} \ \ \ &x_i^2 + y_i^2 \geq \gamma\sigma^2, ~ i =1,\ldots,K, \label{OptProb-power-broadcast-SINR-constraintp8}\\
&{x_i} = \Re (\bm{h}^H_{i}(\boldPhi^{(j-1)})\bm{w}),{\rm{   }}{y_i} = \Im (\bm{h}^H_{i}(\boldPhi^{(j-1)})\bm{w}),~ i =1,\ldots,K,
\end{alignat}
\end{subequations}
where the set of constraints in Inequalities~(\ref{OptProb-power-broadcast-SINR-constraintp8}) are still \mbox{non-convex}. To tackle the non-convexity, we employ the SCA method where  $\bm{w}$ is obtained iteratively. Specifically,
let us define $\bm{r_i} := {({x_i},{y_i})^T}$. Based on the SCA method~\cite{SLA}, during the ${d}^{th}$ iteration, the left-hand side of the constraints in Inequalities~(\ref{OptProb-power-broadcast-SINR-constraintp8})  can be written as
\begin{equation}\label{sca}
\setlength{\abovedisplayskip}{2pt plus 1pt minus 1pt}
\setlength{\belowdisplayskip}{2pt plus 1pt minus 1pt}
\setlength\abovedisplayshortskip{2pt plus 1pt minus 1pt}
\setlength\belowdisplayshortskip{2pt plus  1pt minus 1pt}
x_i^2 + y_i^2 = \bm{r_i}^T\bm{r_i} \geq {\left\| {\bm{p_i}^{(d)}} \right\|^2} + 2\sum\limits_{b = 1}^2 {p_{i,b}^{(d)}} ({r_{i,b}} - p_{i,b}^{(d)}),~ i =1,\ldots,K,
\end{equation}
where $\bm{p_i}$ is  a parameter vector which is  updated as  ${\bm{p_i}^{(d+1)}}={\bm{r_i}^{(d)}}$ during the ${(d+1)}^{th}$ iteration,  and $p_{i,b}$ and ${r_{i,b}} $ stand for the $b^{th}$ component of vector $\bm{p_i}$ and  $\bm{r_i} $, respectively.  During each step of the iterative procedure, the convexity of  $\bm{r_i}^T\bm{r_i}$ and the first order Taylor approximation ensures that the right-hand side bounds the left-hand
side from below in ({\ref{sca}})~\cite{SLA}.  During the ${d}^{th}$ iteration,  Problem (P8) can be written as~\cite{SLA}
\begin{subequations}
\setlength{\abovedisplayskip}{2pt plus 1pt minus 1pt}
\setlength{\belowdisplayskip}{2pt plus 1pt minus 1pt}
\setlength\abovedisplayshortskip{2pt plus 1pt minus 1pt}
\setlength\belowdisplayshortskip{2pt plus  1pt minus 1pt}
\begin{alignat}{1}
 {\text{(P}}{{\text{8}}^{{{'}}}}{\text{)}}:~
&\mathop {\min }\limits_{\bm{w},\{ {x_i},{y_i},\forall i\} }  ~~~ \|\bm{w}\|^2  \\
~\mathrm{s.t.} \ \ \    & {\left\| {\bm{p_i}^{(d)}} \right\|^2} + 2\sum\limits_{b = 1}^2 {p_{i,b}^{(d)}} ({r_{i,b}} - p_{i,b}^{(d)})  \geq \gamma\sigma^2, ~ i =1,\ldots,K, \label{OptProb-power-broadcast-SINR-constraint}\\
&{x_i} = \Re (\bm{h}^H_{i}(\boldPhi^{(j-1)})\bm{w}),{\rm{   }}{y_i} = \Im (\bm{h}^H_{i}(\boldPhi^{(j-1)})\bm{w}),~ i =1,\ldots,K,
\end{alignat}
\end{subequations}
where  $\bm{p_i}$ is initialized with a value that is  in the feasible set of   Problem  $\text{(P8}^{'})$. The iterative procedure for solving Problem  (P8) is outlined in Algorithm~\ref{Alg-Problem-P3}.

\begin{algorithm}[!h]
\small
\caption{SCA method to find $\bm{w}$ for $\text{(P8)}$.} \label{Alg-Problem-P3}
\begin{algorithmic}[1]
\setlength{\belowcaptionskip}{-0.7cm}
\setlength{\abovecaptionskip}{-0.1cm}
\STATE Initialize  $\bm{p_i}$ as $\bm{p_i^{(0)}}$: Randomly generate  $\bm{p_i^{(0)}}$ that belongs to the feasible set of   Problem  ${\text{(P}}{{\text{8}}^{'}}{\text{)}}$;
\STATE Initialize the iteration number $d \leftarrow 1$;
\WHILE{1}
\STATE Solve ${\text{(P}}{{\text{8}}^{{'}}}{\text{)}}$ by utilizing convex optimization software (e.g., CVX~\cite{CVX});
\STATE Set $\bm{p_i^{(d+1)}}=\bm{r_{i}^{(d)}}$ and update $d=d+1$;
\IF{ Convergent or reach the required number of iteration}
\STATE \textbf{break};  \label{Alg-Problem-P1-break2}
\ENDIF
\ENDWHILE
\end{algorithmic}
\end{algorithm}

\textbf{Finding $\boldPhi$ given $\bm{w}^{(q)}$.} As shown in Section~\ref{sec:Optimization-SDR}, by introducing an auxiliary variable $t$, Problem (P4)  is converted into  Problem~(P6). Furthermore, (\ref{proof-inc}) shows that minimizing the transmit power is equivalent to  maximizing~$\min (  {| \bm{h}_1^{H}(\boldPhi) \overline{\bm{w}} |^2},\ldots, {| \bm{h}_K^{H}(\boldPhi) \overline{\bm{w}} |^2} )$.
Hence, we
introduce an auxiliary variable $g$ to  maximize  $\min (  {| \bm{h}_1^{H}(\boldPhi) {\bm{w}} |^2},\ldots, {| \bm{h}_i^{H}(\boldPhi) {\bm{w}} |^2},\ldots, {| \bm{h}_K^{H}(\boldPhi) {\bm{w}} |^2} )$ where ${| \bm{h}_i^{H}(\boldPhi) {\bm{w}} |^2}=\tr(\bm{A}_i\boldsymbol{V} )  + {b}_{i} {b}_{i}^H$, which is equivalent to  maximizing  $\min (  {| \bm{h}_1^{H}(\boldPhi) \overline{\bm{w}} |^2},\ldots, {| \bm{h}_K^{H}(\boldPhi) \overline{\bm{w}} |^2} )$,
 because ${|\bm{h}^H_{i}(\boldPhi)\bm{w}  |^2}$ $={P_t| \bm{h}_i^{H}(\boldPhi) \overline{\bm{w}} |}^2$ and $P_t$ is  constant during this step. 
 Thus,  Problem~(P6) can be further transformed to
\begin{subequations} \label{eq-Prob-P1e}
\begin{alignat}{2}
\setlength{\abovedisplayskip}{3pt plus 1pt minus 1pt}
\setlength{\belowdisplayskip}{3pt plus 1pt minus 1pt}
\setlength\abovedisplayshortskip{3pt plus 1pt minus 1pt}
\setlength\belowdisplayshortskip{3pt plus 1pt minus 1pt}
\text{(P9)}:  \max_{\boldsymbol{V},g} g & \\
\mathrm{s.t.}~~~~~&   \tr(\bm{A}_i\boldsymbol{V} )  + {b}_{i} {b}_{i}^H \geq g + \gamma \sigma^2, ~ i =1,\ldots,K, \label{eq:prob-P1e-prime-SINR-constraint}\\
 &\boldsymbol{V} \succeq 0,~ \ g \geq 0, \boldsymbol{V}_{n,n}  = 1,~ n=1,\ldots,N+1.
\end{alignat}
\end{subequations}
We can utilize the CVX software~\cite{CVX} to solve Problem (P9).  Generally, the SDR may not produce a rank-one solution to Problem (P9). Thus, once $\bm{V}$ is available, we use again the Gaussian randomization~\cite{Gaussianrandom} method to obtain multiple candidate rank-one solutions  to  Problem (P9), and  we select the one with the maximum value of $\min (  {| \bm{h}_1^{H}(\boldPhi) {\bm{w}} |^2},\ldots, {| \bm{h}_K^{H}(\boldPhi) {\bm{w}} |^2} )$ as the value of $\boldPhi$ during the $q^{th}$ iteration, which is denoted by  $\boldPhi^{{(q)}}$.

{\color{black}\begin{prop}
The proposed SCA-based  alternating algorithm is convergent.
\end{prop}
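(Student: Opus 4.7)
The plan is to mirror the structure of the proof of Proposition 1, but to treat carefully the extra layer of iteration introduced by the SCA sub-procedure (Algorithm~\ref{Alg-Problem-P3}) inside the outer alternating loop. I will establish convergence by showing the two standard ingredients: (i) monotonicity of the outer objective value $P_t^{(q)} = \|\bm{w}^{(q)}\|^2$ across iterations $q$, and (ii) boundedness from below of $P_t$ (trivially, $P_t \geq 0$, and nontrivially it is bounded away from zero by the SNR constraints, as in Proposition 1). Together these imply the sequence $\{P_t^{(q)}\}$ converges.

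For the inner SCA step solving Problem (P8), I would first verify that the surrogate Problem $(\text{P8}')$ is a valid minorizer of (P8) in the sense required by successive convex approximation: the right-hand side of (\ref{sca}) lower-bounds $x_i^2 + y_i^2$ by the first-order Taylor expansion of a convex function (so any solution feasible for $(\text{P8}')$ is feasible for (P8)), and the bound is tight at the linearization point $\bm{p}_i^{(d)}$. Using the standard SCA argument of~\cite{SLA}, setting $\bm{p}_i^{(d+1)} = \bm{r}_i^{(d)}$ guarantees that the previous SCA iterate is feasible for the next sub-problem, so the objective $\|\bm{w}\|^2$ is non-increasing across SCA iterations and is bounded below by $0$. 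Hence the inner SCA loop converges, producing an iterate $\bm{w}^{(q)}$ that is feasible for (P2) with value $P_t^{(q)} \leq P_t^{(q-1)}$ provided we warm-start $\bm{p}_i$ from the previous outer iterate.

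For the outer step of finding $\boldPhi^{(q)}$ given $\bm{w}^{(q)}$, I would reuse the argument already used for Proposition 1, appropriately adapted to Problem (P9). Since $\bm{w}^{(q)}$ is fixed and $P_t^{(q)}$ is a positive constant, maximizing $g$ in (P9) is equivalent to maximizing $\min_i |\bm{h}_i^H(\boldPhi)\overline{\bm{w}}^{(q)}|^2 = f$, so choosing $\boldPhi^{(q)}$ from among the Gaussian-randomization candidates as the one maximizing $\min_i |\bm{h}_i^H(\boldPhi)\bm{w}^{(q)}|^2$ guarantees $f^{(q)}_{\text{o}\boldPhi} \geq f^{(q)}_{\text{o}\bm{w}}$. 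Applying (\ref{proof-inc}), this gives $P_t^{(q+1)} \leq P_t^{(q)}$ when solving (P8) in the next outer iteration (again warm-started so that $\bm{w}^{(q)}$ remains a feasible candidate). Combined with the trivial lower bound $P_t \geq 0$, this proves monotone convergence of the SCA-based alternating algorithm.

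The main obstacle I anticipate is the interaction between the Gaussian randomization step in (P9) and the monotonicity claim: the randomization is only guaranteed to give a feasible rank-one $\boldPhi$, not necessarily one that improves over $\boldPhi^{(q-1)}$. To handle this cleanly, I would explicitly include $\boldPhi^{(q-1)}$ itself as one of the candidate rank-one solutions before selecting the best (analogous to the trick highlighted in red in the proof of Proposition 1 for $\bm{w}^{(q-1)}$), which guarantees $f^{(q)}_{\text{o}\boldPhi} \geq f^{(q-1)}_{\text{o}\boldPhi}$ and hence preserves non-increasingness of $P_t$ even when the randomization is unlucky. With this safeguard in place, the rest of the argument is standard.
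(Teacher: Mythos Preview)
Your proposal is correct and follows the same high-level route as the paper: show that $P_t^{(q)}$ is non-increasing in $q$ and bounded below by the SNR constraints, hence convergent. The paper's proof is considerably terser than yours and differs in one structural point worth noting. Rather than warm-starting or inserting $\boldPhi^{(q-1)}$ among the randomization candidates, the paper exploits the built-in constraint $g\ge 0$ in Problem~(P9): since any feasible $\boldsymbol{V}$ in (P9) satisfies $\tr(\bm{A}_i\boldsymbol{V})+b_ib_i^H\ge \gamma\sigma^2$, the paper asserts this alone gives $f^{(q)}_{\text{o}\boldPhi}\ge f^{(q)}_{\text{o}\bm{w}}$, and then simply says that optimality of $\bm{w}^{(q+1)}$ for (P8) yields $f^{(q+1)}_{\text{o}\bm{w}}\ge f^{(q)}_{\text{o}\boldPhi}$, without discussing the inner SCA iterations or warm-starting at all. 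Your treatment is more careful on both fronts: you spell out why the SCA minorization guarantees feasibility and monotone descent inside Algorithm~\ref{Alg-Problem-P3}, and you correctly flag that Gaussian randomization of the relaxed (P9) solution need not preserve $g\ge 0$ at the rank-one level, which is why an explicit fallback candidate is prudent. The paper's argument is cleaner to read but glosses over exactly the randomization subtlety you identify; your safeguard buys rigor at the cost of a slightly longer proof.
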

\begin{proof}
Similarly, the convergence of the proposed SCA-based  alternating algorithm is guaranteed by the
following two facts. First, the objective value in Problem (P2)
 is non-increasing over iterations. More specifically, during each iteration, the variable $g$ satisfies $g\ge 0$,
  which means  $ f^{{(q)}}_{\text{o}\boldPhi}\ge f^{(q)}_{\text{o}\bm{w}}$. Furthermore, if  $\bm{w}^{(q+1)}$ is the optimal solution to Problem $\text{(P}8)$ during the $(q+1)^{th}$ iteration, we obtain $f^{(q+1)}_{\text{o}\bm{w}} \ge f^{{(q)}}_{\text{o}\boldPhi}$.  Therefore, we have $f^{(q+1)}_{\text{o}\bm{w}} \ge f^{{(q)}}_{\text{o}\boldPhi}\ge f^{(q)}_{\text{o}\bm{w}}$, indicating that $P_t^{(q+1)} \le P^{(q)}_t$. Hence, the transmit power at the BS $P_t$ is \mbox{non-increasing} over the iterations.
 Second, the optimal
value  is bounded from below due to the SNR
constraints for Problem (P2). Therefore, the proposed SCA-based  alternating algorithm is 
convergent.
\end{proof}}

\subsection{Complexity analysis} \label{sec1:Optimization2}

{\color{red} The  procedure  of  the  SDR-based  algorithm  is  shown  in  Algorithm  1, which shows  that steps 4 and 9 take a major part of the complexity because steps 4 and 9 solve Problems (P3) and (P7), respectively.
The complexities of solving Problems (P3) and (P7) include the complexities of 1) solving the SDP problem by utilizing the CVX tool and 2) Gaussian randomization process.  CVX first transforms Problems (P3) and (P7) into the standard SDP problems and then utilizes  an  interior  point  method (IPM) to obtain the solution. There are  $(K+M^2)$ variances in the standard  SDP form of Problem (P3), and the IPM requires $\mathcal{O}\left( \sqrt{K+M^2}\right)$ iterations and costs $\mathcal{O}\left( {(K+M^2)}^{3}\right)$ arithmetic  operations in each iteration to obtain a  solution~\cite{complexity}. Furthermore, the eigenvalue decomposition of $\bm{X}$ of size $M\times M$ takes a major part of the computational complexity in the  Gaussian randomization process, which requires $\mathcal{O}(M^3)$  arithmetic operations~\cite{2-5}. Then, the complexity of  solving Problem (P3) is $\mathcal{O}\left( {(K+M^2)}^{3.5}+M^3\right)$.
Similarly,  the complexity of  solving Problem (P7) is $\mathcal{O}\left( {(2K+(N+1)^2)}^{3.5}+(N+1)^3\right)$.
Finally, the complexity of the  SDR-based  algorithm is $\mathcal{O}\left(L_\text{SDR} \left({(K+M^2)}^{3.5}+M^3+{(2K+(N+1)^2)}^{3.5}+(N+1)^3\right)\right)$
where  $L_\text{SDR}$   denotes the number of iterations of the SDR-based algorithm.

The procedure  of  the  SCA-based  algorithm  is  shown  in  Algorithm  2,  and  steps 4 and 9 take a major part of  complexity because steps 4 and 9 solve Problems (P8) and (P9), respectively.
Similar to  Problem (P3),  
the complexity of  solving Problem (P9) is $\mathcal{O}\left(( K+(N+1)^2)^{3.5}+(N+1)^3\right)$ for one  iteration. Furthermore, the SCA method obtains a  solution to 
Problem (P8) by solving  ${\text{(P}}{{\text{8}}^{{'}}}{\text{)}}$ iteratively, and we use $I_\text{SCA}^\text{P8}$ to denote  the number of iterations. Similar to Problem (P3), we  compute the complexity of the standard form of  ${\text{(P}}{{\text{8}}^{{'}}}{\text{)}}$ which is a second-order cone programming (SOCP). 
Let $q$ be the number of Linear Matrix Inequality (LMI) constraints,  $m$ be the number of SOC constraints,  $k_j$ denote the size of the $j^{th}$ LMI or SOC constraint, and $n$ represent the total number of variables of the standard  form of  ${\text{(P}}{{\text{8}}^{{'}}}{\text{)}}$. Then, 
the complexity of solving the standard  ${\text{(P}}{{\text{8}}^{{'}}}{\text{)}}$  is $\sqrt{\alpha}C$ where $\sqrt{\alpha}=\sqrt{\sum\nolimits_{j=1}^{q} k_j+2m}$ is the iteration complexity and  $C= n\sum\nolimits_{j=1}^{q} k_j^{3}+n^2\sum\nolimits_{j=1}^{q} k_j^{2}+n\sum\nolimits_{j=1}^{m} k_j^{2}+n^3$ is the per-iteration computation cost~\cite{2-6}. 
 Note that, $\{x_i,y_i\}$ in Problem ${\text{(P}}{{\text{8}}^{{{'}}}}{\text{)}}$ can be expressed as expressions in CVX and thus we do not regard them as variances.
For the standard  form of Problem ${\text{(P}}{{\text{8}}^{{'}}}{\text{)}}$, the total number of variables is $n=M+1$ and there are $K$ LMIs of size 1 and one SOC of size $M+1$ in constraints. Therefore, the total complexity of the SCA-based method is $\mathcal{O}\left(I_\text{SCA}\left(I_\text{SCA}^\text{P8}\sqrt{K+2}(M+1)^{3}+( K+(N+1)^2)^{3.5}+(N+1)^3 \right)\right)$, where $I_\text{SCA}$ is the number of alternating optimization iterations for the SCA-based method.

We can see that 
the highest complexity orders  of the SCA-based  algorithm  and the SDR-based algorithm are ($M^{3}$, $N^{7}$)  and  ($M^{7}$, $N^{7}$). So, the
 complexity of the SDR-based  algorithm is
much higher than that of the SCA-based  algorithm 
for the  massive MIMO scenarios.  Furthermore, the simulation results in Section VI will show that the SCA-based method  outperforms the SDR-based method. Note that,  all computations are executed at the BS with strong data computing capability and do not increase the amount of calculation on the MEs and RIS sides. 
}

 \section{Lower bounds for the  average transmit power} \label{sec:analysis}
 In this section,  we  first derive an
 analytical lower  bound and then a tighter semi-analytical  lower  bound.
In particular, we show how the lower bounds depend on the number of RIS units $N$,  the  number of MEs $K$, and  the number of antennas $M$. The following two case studies are considered: \mbox{1) $K=1$ and $M>1$}, and \mbox{2) $K>1 $ and $M>1$.} In addition, when   discussing   the setup $K=1$, we omit  the subscript $i$ (ME index) of  $\beta_{\text{b},i}$ and $\beta_{\text{r},i}$ for  ease of presentation. For benchmarking purposes, we also derive analytical lower bounds  for the transmit power without the RIS and  with random phase shifts at the RIS.

  We consider the independent and identically distributed (i.i.d)  Rayleigh fading channel  model for the RIS-$i^{th}$ME and BS-$i^{th}$ME links{\color{red}\footnote{{\color{red}
 The BS and the RIS are fixed in position while MEs may be in motion. Commonly, the direct line of BS-ME or RIS-ME  is  obstructed by buildings or something else. In such case, the amplitude
fluctuations of the received signal follow Rayleigh distribution~\cite{1-6}, thereby adopting the Rayleigh fading channels for the BS-ME and RIS-ME links.
In particular,  when analyzing  transmission
technologies,  the typical way is 
to consider the tractable i.i.d Rayleigh fading channel model~\cite{1-8}.
~In fact, the   i.i.d Rayleigh fading channel is  reasonable  in  isotropic scatterer environment  when the RIS and the BS are in uniform linear array (ULA)  with $\lambda/2$-spacing,  as in~\cite{1-8,1-10,1-11}.
}}}, i.e.~$\bm{h}_{\text{r},i}\sim \text{ }\mathcal{C}\mathcal{N}(0,{\beta_{\text{r},i}^2}{\bm{I}})$ and~$\bm{h}_{\text{b},i}\sim \text{ }\mathcal{C}\mathcal{N}(0,{\beta_{\text{b},i}^2}{\bm{I}})$ where $\beta_{\text{r},i}^2$ and $\beta_{\text{b},i}^2$ account for  the path loss, and a full-rank LoS channel model for the BS-RIS link. Let ($x_{\text{BS}}$, $y_{\text{BS}}$, $z_{\text{BS}}$) and ($x_{\text{RIS}}$, $y_{\text{RIS}}$, $z_{\text{RIS}}$) be the  coordinates of the BS  and the RIS, respectively.
Let $\bm{H}_{\text{b},\text{r},m,n}$
denote  the channel  response between  the $m^{th}$ antenna of the BS  and
the $n^{th}$ element at the RIS.  Then,   the full-rank LoS channel  between the BS and the RIS is given by~\cite{Los}
\begin{equation}\label{eqH}
\setlength{\abovedisplayskip}{2pt plus 1pt minus 1pt}
\setlength{\belowdisplayskip}{2pt plus 1pt minus 1pt}
\setlength\abovedisplayshortskip{2pt plus 1pt minus 1pt}
\setlength\belowdisplayshortskip{2pt plus 1pt minus 1pt}
\begin{aligned}
\bm{H}_{\text{b},\text{r},m,n}&={\beta_{\text{b,r}}}\hspace{-1.5pt}\exp \left( {j\frac{{2\pi }}{\lambda }{d_{{\rm{BS}}}}}({(m - 1)\text{sin}{\phi _{Lo{S_1}(n)}}\text{sin}{\theta _{Lo{S_1}(n)}}})\right) \\
&\quad \times \exp \left( {j\frac{{2\pi }}{\lambda }{d_{{\rm{RIS}}}}}((n - 1)\text{sin}{\phi _{Lo{S_2}(n)}}\text{sin}{\theta _{Lo{S_2}(n)}}) \right),~m =1,\ldots,M,~n =1,\ldots,N,
\end{aligned}
\end{equation}
 where $\lambda $ is the wavelength, $d_\text{BS}$ and $d_\text{RIS}$ are the inter-antenna distances at the BS and the RIS, respectively.
  ${\phi _{Lo{S_1}}}(n)$ and ${\phi _{Lo{S_2}}}(n)$ denote the azimuth angles at the BS and the RIS, respectively,
  ${\theta _{Lo{S_1}}}(n)$ and ${\theta _{Lo{S_2}}}(n)$ denote the  elevation angle of departure at the BS and the elevation angle of arrival at the  RIS, respectively, and $\beta_{\text{b,r}}$ accounts for  the path loss of the BS-RIS channel.
   ${\phi _{Lo{S_1}}}(n)$  and ${\theta _{Lo{S_1}}}(n)$ are generated uniformly between 0 to  $2\pi$ and 0 to  $\pi$, respectively, and satisfy ${\phi _{Lo{S_2}}}(n) = \pi  + {\phi _{Lo{S_1}}}(n), {\theta _{Lo{S_2}}}(n) = \pi  - {\theta _{Lo{S_1}}}(n)$~\cite{Los}. We can see from ({\ref{eqH}}) that, the path losses between  any antenna at the BS and any  element at the RIS are the same (i.e., $\beta_{\text{b,r}}^2$ in ({\ref{eqH}})). This is because the distance  between the BS and  the  RIS is relatively large compared to the size  of  the  RIS~{\cite{RISimpleDi}},  such as the far-field regime~{\cite{TBA}}. 


\subsection{Lower bounds for the average transmit power in  RIS aided systems}

Since the  BS-MEs and RIS-MEs channels are  random variables,  we focus  on analyzing the lower bound for the average transmit power in  RIS aided systems{\color{red}\footnote{\color{red}The  proposed  optimization  algorithms  are  not  for  some  particular  channels.
 The proposed method of deriving the lower bounds for  the  RIS-aided  system is based on  the LoS channel and  i.i.d  Rayleigh fading channel, which requires the first and second  central moments of the amplitudes of considered channels.
The  spatially correlated Rician fading  channel  couples  the amplitudes of the LoS channel and  spatially correlated Rayleigh fading channel~\cite{IRSwu2}. Based on the triangle inequality, the spatially correlated Rician fading channel  can be decoupled into the LoS channel and spatially correlated Rayleigh fading channel in terms of  amplitude. Furthermore, the spatially correlated Rayleigh fading  channel  can  be  represented  by a linear combination of  multiple  i.i.d Rayleigh fading channels~\cite{1-16}. Thus,   based on the triangle inequality,  the spatially correlated Rayleigh fading  channel can be  decoupled into the sum of multiple  i.i.d  Rayleigh fading channels in terms of  amplitude. 
Therefore, the method of deriving lower bounds in our paper can be easily extended to the spatially correlated Rician fading channel.

}}.  We first present an analytical lower bound. To get a tighter lower bound, we further derive a semi-analytical lower bound.


\subsubsection{Analytical lower bound }\label{sec:closed-bound}
\

\textbf{Case 1): $K=1$ and $M>1$}


To obtain the minimum transmit power, the QoS constraint inequality in Problem (P1) needs to be fulfilled with equality, i.e., ${{P_t| \bm{h}_1^{H}(\boldPhi) \overline{\bm{w}} |}^2}= \gamma{{\sigma}^2 }$. In fact, the optimized  $\boldPhi$ and $\bm{w}$  maximize the value of $| \bm{h}_1^{H}(\boldPhi)\overline{\bm{w}}|^2$ since $\gamma{{\sigma}^2 }$ is constant, thus obtaining the  minimum transmit power ${P_t}$.

 The BS-MEs and RIS-MEs channels are  random variables. For each realization of these random variables,  we can obtain the optimized $\boldPhi$ and $\bm{w}$ by utilizing our proposed optimization methods, and can obtain the maximum value of $| \bm{h}_1^{H}(\boldPhi)  \overline{\bm{w}} |^2$.  Then,  we can  obtain the  average   maximum value of $| \bm{h}_1^{H}(\boldPhi)  \overline{\bm{w}} |^2$, and thus
the average  transmit power can be formulated as
\begin{equation}\label{L-hp}
\setlength{\abovedisplayskip}{2pt plus 1pt minus 1pt}
\setlength{\belowdisplayskip}{2pt plus 1pt minus 1pt}
\setlength\abovedisplayshortskip{2pt plus 1pt minus 1pt}
\setlength\belowdisplayshortskip{2pt plus 1pt minus 1pt}
\overline{P_t}=\frac{{\sigma}^2\gamma}{{\mathbb{E}(\max(| \bm{h}_1^{H}(\boldPhi)  \overline{\bm{w}} |^2))}}.
\end{equation}

As for $|\bm{h_1}^{H}(\boldPhi) \overline{\bm{w}}|$, we have~\cite{wu2018intelligentfull}
\begin{equation}\label{L-h3}
 \setlength{\abovedisplayskip}{2pt plus 1pt minus 1pt}
\setlength{\belowdisplayskip}{2pt plus 1pt minus 1pt}
\setlength\abovedisplayshortskip{1pt plus 1pt minus 1pt}
\setlength\belowdisplayshortskip{1pt plus 1pt minus 1pt}
\begin{aligned}
|  \bm{h}_1^{H}(\boldPhi) \overline{\bm{w}}  | &=| \bm{h}_{\text{r},1}^{{H}}{\boldPhi}{\bm{H}_{\text{b},\text{r}}} \overline{\bm{w}}+  \bm{h}_{\text{b},1}^{{H}} \overline{\bm{w}}|
\mathop  \le \limits^{(a)}  |\bm{h}_{\text{r},1}^{{H}}{\boldPhi}{\bm{H}_{\text{b},\text{r}}} \overline{\bm{w}}|+|\bm{h}_{\text{b},1}^{{H}} \overline{\bm{w}}|.
\end{aligned}
\end{equation}
Based on the triangle inequality, inequality (a) takes the equality sign  if and only if
$\arg(\bm{h}_{\text{r},1}^{{H}}{\boldPhi}{\bm{H}_{\text{b},\text{r}}} \overline{\bm{w}})=\arg(\bm{h}_{\text{b},1}^{{H}} \overline{\bm{w}})=\varphi_0$~\cite{wu2018intelligentfull}.
Define ${A}:=|\bm{h_{\text{r},1}}^{{H}}{\boldPhi}{\bm{H}_{\text{b},\text{r}}} \overline{\bm{w}}|, {B}:=|h_{\text{b},1}^{{H}} \overline{\bm{w}}|$. Then, based on (\ref{L-h3}), we have~\cite{han2019intelligent}
\begin{equation}\label{L-h4}
\setlength{\abovedisplayskip}{2pt plus 1pt minus 1pt}
\setlength{\belowdisplayskip}{2pt plus 1pt minus 1pt}
\setlength\abovedisplayshortskip{1pt plus 1pt minus 1pt}
\setlength\belowdisplayshortskip{1pt plus 1pt minus 1pt}
\begin{aligned}
\mathbb{E}(\max(|\bm{h}_1^{H}(\boldPhi)  \overline{\bm{w}}|^2))
=\mathbb{E}(({A}+{B})^2)
=\mathbb{E}({A}^2)+2\mathbb{E}({AB})+\mathbb{E}({B}^2).
\end{aligned}
\end{equation}
Thus, to derive the lower bound for the  average  transmit power $\overline{P_t}$, we need to compute the  maximum value of $\mathbb{E}(\max(|\bm{h}_1^{H}(\boldPhi)  \overline{\bm{w}}|^2))$ with respect to ${\boldPhi}$~and ${\overline{
\bm{w}}}$~\cite{han2019intelligent}, denoted  by  $Q_1$, i.e.,
\begin{equation}\label{L-2h5a}
\setlength{\abovedisplayskip}{2pt plus 1pt minus 1pt}
\setlength{\belowdisplayskip}{2pt plus 1pt minus 1pt}
\setlength\abovedisplayshortskip{1pt plus 1pt minus 1pt}
\setlength\belowdisplayshortskip{1pt plus 1pt minus 1pt}
\begin{aligned}
 Q_1& =\max_{\boldPhi,\overline{
\bm{w}}}(\mathbb{E}({A}^2)+2\mathbb{E}({AB})+\mathbb{E}({B}^2)).
\end{aligned}
\end{equation}
Next, we discuss  how to compute  $\mathbb{E}({A}^2)$, $\mathbb{E}({AB})$ and $\mathbb{E}({B}^2)$, thereby deriving  $ Q_1$.

Define $ C_n:={{\sum\limits_{m=1}^{M}}{\bm{H}_{\text{b},\text{r},m,n}}  \overline{w_m}}$ \ \ $(n=1,\ldots,N)$. Then,  we   have
\begin{equation}\label{Cn}
\setlength{\abovedisplayskip}{2pt plus 1pt minus 1pt}
\setlength{\belowdisplayskip}{2pt plus 1pt minus 1pt}
\setlength\abovedisplayshortskip{1pt plus 1pt minus 1pt}
\setlength\belowdisplayshortskip{1pt plus 1pt minus 1pt}
\begin{aligned}
{\rm{|}}{C_n}{{\rm{|}}^2} &= {\left| {\sum\limits_{m = 1}^M {{H_{\text{b},\text{r},m,n}}\overline{w_m}} } \right|^2} = {\left| {{H_{\text{b},\text{r},1,n}}\overline{w_1} + \cdots + {H_{\text{b},\text{r},M,n}}\overline{w_M}} \right|^2}\\
&\mathop  \le \limits^{(a)} \sum\limits_{m = 1}^M {{{\left| {{H_{\text{b},\text{r},m,n}}\overline{w_m}} \right|}^2}}  + 2\sum\limits_{t = 1}^M {\sum\limits_{k = t + 1}^M {\left| {{H_{\text{b},\text{r},t,n}}\overline{w_t}} \right|} } \left| {{H_{\text{b},\text{r},k,n}}\overline{w_k}} \right|\\
&= {\left| {{H_{\text{b},\text{r},1,n}}} \right|^2} + \left( {2{{\left| {{H_{\text{b},\text{r},1,n}}} \right|}^2}\sum\limits_{t = 1}^M {\sum\limits_{k = t + 1}^M {\left| {\overline{w_t}} \right|} } \left| {\overline{w_k}} \right|} \right)
\mathop  \le \limits^{(b)} {\left| {{H_{\text{b},\text{r},1,n}}} \right|^2}M \mathop  = \limits^{(c)}  {M\beta _{\text{b,r}}^2},
\end{aligned}
\end{equation}
where step~(a)~follows from the fact that ${\left| {\sum\nolimits_{m = 1}^M {{H_{\text{b},\text{r},m,n}}\overline{w_m}} } \right|} \le  \sum\nolimits_{m = 1}^M {{{\left| {{H_{\text{b},\text{r},m,n}}\overline{w_m}} \right|}}}$, step~(b)~follows  from  the fact  that the term $(|\overline {{w_k}} ||\overline {{w_t}} |)$ takes  the maximum value if $|\overline {{w_k}} |=|\overline {{w_t}} |$, which yields $|\overline {{w_1}} |^2=\cdots=|\overline {{w_M}}|^2=\frac{1}{M}$ because  $\sum\nolimits_{m=1}^{M}| \overline{w_m}|^2=1$, and step~(c)~follows from the fact that the elements in ${\bm{H}_{\text{b},\text{r}}}$ have the same amplitude  since  the distance  between the BS and  the  RIS is relatively large compared to the size  of  the  RIS.

As for $\mathbb{E}({A}^2)$, we have
\begin{subequations}\label{L-2h1}
\setlength{\abovedisplayskip}{2pt plus 1pt minus 1pt}
\setlength{\belowdisplayskip}{2pt plus 1pt minus 1pt}
\setlength\abovedisplayshortskip{1pt plus 1pt minus 1pt}
\setlength\belowdisplayshortskip{1pt plus 1pt minus 1pt}
\begin{align}
\mathbb{E}(A)&\mathop =\limits^{(a)}\mathbb{E}\left( {\sum\limits_{n=1}^{N}}{|h_{\text{r},1,n}^{H}}|\left| {{\sum\limits_{m=1}^{M}}{{H}_{\text{b},\text{r},m,n}}  \overline{w_m} }\right|  \right)
\mathop =\limits^{(b)}\mathbb{E}({|h_{\text{r},1,1}^{H}}|)\left({\sum_{n=1}^{N}} |C_n|\right)\mathop \le \limits^{(c)} \frac{\sqrt{\pi M} N{\beta_{\text{b,r}}} \beta_r}{2}, \\
\mathbb{E}({A}^2)
    &=\mathbb{E}\left(\left( {\sum\limits_{n=1}^{N}}{|h_{\text{r},i,n}^{H}}| |C_n| \right)^2\right) \nonumber\\
    &=\mathbb{E}\left( {\sum\limits_{n=1}^{N}}\left({|h_{\text{r},1,n}^{H}}|^2 |C_n|^2 \right)+ 2\sum\limits_{n=1}^{N}\sum\limits_{i=n+1}^{N}|h_{\text{r},1,n}^{H}||C_n||h_{\text{r},1,i}^{H}||C_i| \right) \nonumber \\
    &=\mathbb{E}\left(|h_{\text{r},1,1}^{H}|^2\right){\sum\limits_{n=1}^{N}}\left(|C_n|^2\right) + 2\mathbb{E}^2\left(|h_{\text{r},1,1}^{H}|\right)\sum\limits_{n=1}^{N}\sum\limits_{i=n+1}^{N}|C_n||C_i| \nonumber\\
   & \mathop \le \limits^{(d)}
    \frac{\pi N^2{\beta_{\text{b,r}}}^2 \beta_r^2M}{4}\hspace{-1pt}+\hspace{-1pt}\frac{N\beta_r^2{\beta_{\text{b,r}}}^2M}{2}(2\hspace{-1pt}-\hspace{-1pt}\frac{\pi}{2}), \label{A-equation-1}
\end{align}
\end{subequations}
where
step~(a)~follows from the fact that $\arg(\bm{h_{\text{r},1}}^{{H}}{\boldPhi}{\bm{H}_{\text{b},\text{r}}} \overline{\bm{w}})=\varphi_0$, step~(b)~follows from the fact that $\bm{h_{\text{r},1}}\sim \text{ }\mathcal{C}\mathcal{N}(0,{\beta_{\text{r},1}^2}{\bm{I}})$, step~(c)~is derived based on the Cauchy--Schwartz inequality,
and steps (c) and (d)  follow from  the fact  that  $|C_n|^2\le {{M}{\beta_{\text{b,r}}}^2}~(n=1,\ldots,N)$ as shown in (\ref{Cn}) and that $| {h_{\text{r},1,1}}^{H}|$  has a Rayleigh distribution with variance $\frac{\beta_r^2}{2}(2-\frac{\pi}{2})$.

As for $\mathbb{E}({B}^2)$, we have~\cite{han2019intelligent}
\begin{subequations}\label{L-2h2}
\setlength{\abovedisplayskip}{2pt plus 1pt minus 1pt}
\setlength{\belowdisplayskip}{2pt plus 1pt minus 1pt}
\setlength\abovedisplayshortskip{2pt plus 1pt minus 1pt}
\setlength\belowdisplayshortskip{2pt plus 1pt minus 1pt}
\begin{alignat}{2}
\mathbb{E}({B})&=\mathbb{E}\left( {\left|\sum\limits_{m=1}^{M} {h_{\text{b},m}}^{H} \overline{w_m}\right|}\right)\mathop \le\limits^{(a)}\mathbb{E}\left( \sum\limits_{m=1}^{M}| {h_{\text{b},m}}^{H} \overline{w_m}|\right)
  =\mathbb{E}(|h_{\text{b},1}^{{H}}|)\sum\limits_{m=1}^{M}| \overline{w_m}|,\\
\mathbb{E}^2(B)  &= \mathbb{E}^2(|h_{\text{b},1}^{{H}}|)\left(\sum\limits_{m=1}^{M}| \overline{w_m}|\right)^2
\mathop  \le \limits^{(b)} (M|\overline{w_1}|)^2\frac{\beta_b^2 \pi}{4}=\frac{\pi \beta_b^2 M}{4},\\
\mathbb{E}({B}^2)
    &\le \mathbb{E}\left(\left(\sum\limits_{m=1}^{M}| {h_{\text{b},m}}^{H} | | \overline{w_m}| \right)^2\right) \nonumber\\
    &=\mathbb{E}\left( {\sum\limits_{m=1}^{M}}\left({|h_{\text{b},m}^{H}}|^2 | \overline{w_m}|^2 \right)+ 2\sum\limits_{m=1}^{M}\sum\limits_{i=m+1}^{M}|h_{\text{b},m}^{H}||\overline{w_m}||h_{\text{b},m}^{H}||\overline{w_i}| \right) \nonumber \\
    &=\mathbb{E}\left(|h_{\text{b},1}^{H}|^2\right) {\sum\limits_{m=1}^{M}}\left(|\overline{w_m}|^2\right)+ 2\mathbb{E}^2|h_{\text{b},1}^{H}|\sum\limits_{m=1}^{M}\sum\limits_{i=M+1}^{M}\|\overline{w_m}||\overline{w_i}| 
   \mathop  \le \limits^{(b)} \frac{\pi \beta_b^2 M}{4}+ \frac{\beta_b^2}{2}(2-\frac{\pi}{2}),\label{B-equation-1}
\end{alignat}
\end{subequations}
where step (a)~follows from the fact that $ {\left|\sum\nolimits_{m=1}^{M} {h_{\text{b},m}}^{H} \overline{w_m}\right|}\le \sum\nolimits_{m=1}^{M}| {h_{\text{b},m}}^{H} \overline{w_m}|$, steps (b) and (c)~follow from the fact that $| {h_{\text{b},1}}^{H}|$  has a Rayleigh distribution with mean  $\frac{ \beta_\text{b} {\sqrt{\pi }}  }{2}$ and variance $\frac{\beta_\text{b}^2}{2} (2-\frac{\pi}{2})$, step (b)  follows  from  the fact  that the  term $(| \overline{w_1}|+| \overline{w_2}|+\cdots+| \overline{w_M}|)^2$ takes  the maximum value if $| \overline{w_1}|=| \overline{w_2}|=\cdots=| \overline{w_M}|$ because $\sum\nolimits_{m=1}^{M}| \overline{w_m}|
^2=1$.

As for $\mathbb{E}(AB)$,  given $\overline{\bm{w}}$, variables $A$ and $B$ are independent of each other. Hence, we have
\begin{equation}\label{L-2h3}
\setlength{\abovedisplayskip}{2pt plus 1pt minus 1pt}
\setlength{\belowdisplayskip}{2pt plus 1pt minus 1pt}
\setlength\abovedisplayshortskip{3pt plus 1pt minus 1pt}
\setlength\belowdisplayshortskip{3pt plus 1pt minus 1pt}
\begin{aligned}
&\mathbb{E}(AB)=\sqrt{\mathbb{E}^2(A)\mathbb{E}^2(B)}\le \frac{N\pi \beta_r {\beta_{\text{b,r}}} \beta_b  M}{4}. \\
\end{aligned}
\end{equation}
When  $|\overline{w_1}|=\cdots=|\overline{w_M}|$  and $|C_1|=\cdots=|C_N|$, (\ref{A-equation-1}), (\ref{B-equation-1}) and (\ref{L-2h3}) hold with the equality sign. 
Then, we  can formulate $Q_1$ as follows
\begin{equation}\label{L-2h5}
\begin{aligned}
 Q_1& =\max_{\boldPhi,\overline{
\bm{w}}}(\mathbb{E}({A}^2)+2\mathbb{E}({AB})+\mathbb{E}({B}^2))\\
   &=\frac{\pi N^2{\beta_{\text{b,r}}}^2 \beta_r^2M}{4}+\frac{N\beta_r^2{\beta_{\text{b,r}}}^2M}{2}(2-\frac{\pi}{2}) +\frac{N\pi \beta_r {\beta_{\text{b,r}}} \beta_b   M}{2} +\frac{\beta_b^2}{2}(2-\frac{\pi}{2})+\frac{\pi \beta_b^2 M}{4}.
\end{aligned}
\end{equation}

Based on (\ref{L-hp}),
a lower bound for the average  transmit power at the BS is given by
\begin{equation}\label{L-011}
\begin{aligned}
&  P_{K=1,M>1}^L=\dfrac{{\sigma}^2\gamma}{Q_1}=\dfrac{{\sigma}^2\gamma}{\resizebox{0.7\hsize}{!}{$\dfrac{\pi N^2\beta_b^2 \beta_r^2M}{4}+\dfrac{N\beta_r^2{\beta_{\text{b,r}}}^2M}{2}(2-\dfrac{\pi}{2}) +\dfrac{N\pi \beta_r {\beta_{\text{b,r}}} \beta_b   M}{2}+\dfrac{\beta_b^2}{2}(2-\dfrac{\pi}{2})+\dfrac{\pi \beta_b^2 M}{4}$}},
\end{aligned}
\end{equation}
where the superscript ``$L$" is used to denote ``lower bound".
(\ref{L-011}) confirms that the average transmit power of an RIS scales with $1/{N^2}$~\cite{Nnew}.

\textbf{Case 2): $K>1 $ and $M>1$}

Based on Problem (P1) and (\ref{min-power}), the average  transmit power  $\overline{P_t}$  is~\cite{han2019intelligent}
\begin{equation}\label{L-3h2}
\overline{P_t}= \frac{\gamma\sigma^2}{\min ( {\mathbb{E}( \max({| \bm{h_1}^{H}(\boldPhi)  \overline{\bm{w}} |^2}))},\ldots,  {\mathbb{E}(\max({| \bm{h_K}^{H}(\boldPhi)  \overline{\bm{w}} |^2}) )})}.
\end{equation}
By using the inequality ${\min ( {\mathbb{E}( \max({| \bm{h_1}^{H}(\boldPhi)  \overline{\bm{w}} |^2}))},\ldots,\mathbb{E}(\max({| \bm{h_i}^{H}(\boldPhi)  \overline{\bm{w}} |^2})),\ldots,{\mathbb{E}(\max({| \bm{h_K}^{H}(\boldPhi)  \overline{\bm{w}} |^2} ))})}$
$\le {\min({Q_1,\ldots,Q_i,\ldots,Q_K})}$, a lower bound for the average transmit power at the BS is~\cite{han2019intelligent}
\begin{equation}\label{L-002}
P_{K>1,M>1}^L= \frac{\gamma\sigma^2}{\min({Q_1,\ldots,Q_i,\ldots,Q_K})},
\end{equation}
where  $Q_i$ is the maximum value of $\mathbb{E}(\max(| {h_i}^{H}(\boldPhi) \overline{\bm{w}} |^2))$ with respect to ${\boldPhi}$~and ${\overline{
\bm{w}}}$. (\ref{L-2h1})--(\ref{L-2h5}) allow us to compute $Q_1$.  A similar approach is used to compute other values of  $Q_i$ ($i=1,\ldots,K$).

\subsubsection{Semi-analytical  lower bound}

\

\textbf{Case 1): $K=1$ and $M>1$}

 The maximum value of  $ \mathbb{E}(\max(|\bm{h}_1^{H}(\boldPhi)  \overline{\bm{w}}|^2))$ with respect to $\boldPhi$ and $\overline{\bm{w}}$  is  \\$ Q_1 =\mathop {\max }\limits_{\boldPhi,\overline{
\bm{w}}}(\mathbb{E}({A}^2)+2\mathbb{E}({AB})+\mathbb{E}({B}^2))$ in (\ref{L-2h5a}). In the following, we compute each term of  $Q_1$. Define $ C_n:={{\sum\nolimits_{m=1}^{M}}{\bm{H}_{\text{b},\text{r},m,n}}  \overline{w_m}}$. Then, we  have
\begin{equation}\label{L-2c1}
  \begin{aligned}
|{C_n}{|} &=\sqrt{ {\left| {\sum\limits_{m = 1}^M {{H_{\text{b},\text{r},m,n}}} \overline {{w_m}} } \right|^2}{\rm{   }}}\vspace{10pt}
 = \sqrt{{\left| {\sum\limits_{m = 1}^M {|{H_{\text{b},\text{r},m,n}}} ||\overline {{w_m}} |\text{exp}(j({\varphi _{{H_{\text{b},\text{r},m,n}}}} + {\varphi _{{w_m}}}))} \right|^2}{\rm{  }}}\\
 &= \sqrt{{\left| {{H_{\text{b},\text{r},1,1}}} \right|^2}{\left| {\sum\limits_{m = 1}^M | \overline {{w_m}} |\text{exp}(j({\varphi _{{H_{\text{b},\text{r},m,n}}}} + {\varphi _{{w_m}}}))} \right|^2}}
  \mathop =\limits^{(a)}  \sqrt{{\left| {{H_{\text{b},\text{r},1,1}}} \right|^2}{\left| {\sum\limits_{m = 1}^M | \overline {{w_m}} |\text{exp}(j({\varphi _{_{m,n}}}))} \right|^2}}\vspace{10pt} \\
  &=\sqrt{{\left| {{H_{\text{b},\text{r},1,1}}} \right|^2}\left| {\sum\limits_{m = 1}^M | \overline {{w_m}} |(\text{cos}({\varphi _{_{m,n}}}) + j\text{sin}({\varphi _{_{m,n}}})}) \right|^2}\\
   & = \sqrt{{\left| {{H_{\text{b},\text{r},1,1}}} \right|^2}\left( {\sum\limits_{m = 1}^M | \overline {{w_m}} {|^2}({\cos^2}({\varphi _{_{m,n}}}) + {{\sin }^2}({\varphi _{_{m,n}}}))}
+2\sum\limits_{k = 1}^M {\sum\limits_{t = k + 1}^M {(|\overline {{w_k}} ||\overline {{w_t}} |(\cos({\varphi _{_{k,n}}} - {\varphi _{_{t,n}}})))} }  \right)}\vspace{10pt} \\
& \mathop \le\limits^{(b)}\sqrt { {{{{({{\beta_{\text{b,r}}}})}^2}}}\left( {1 + \frac{2}{M}\sum\limits_{k = 1}^M {\sum\limits_{t = k + 1}^M {(({\cos}({\varphi _{_{k,n}}} - {\varphi _{_{t,n}}})))} } } \right)} \vspace{10pt}\\
 &\mathop =\limits^{(c)}\sqrt{ {{{{{({\beta _{\text{b,r}}})}^2}}}\left( {1 + \frac{2}{M}\sum\limits_{k = 1}^M {\sum\limits_{t = k + 1}^M {(({\cos}(\text{const}{_{k,t,n}} + {\varphi _{{w_k} - {w_t}}})))} } } \right)} }
 \mathop =\limits^{(d)}\sqrt{ {{{{{({\beta _{\text{b,r}}})}^2}}}\left( {1 + \frac{2}{M}\sum\limits_{k = 1}^M {\sum\limits_{t = k + 1}^M {f_\text{c}({{{w_k}}, {{w_t}}})} } } \right)} },
\end{aligned}
\end{equation}
where   step (a) is derived by defining$ {{\varphi _{_{m,n}}} := {\varphi _{{H_{\text{b},\text{r},m,n}}}} + {\varphi _{{w_m}}}}$ where $\varphi_c$ denotes the angle of a complex number $c$, step (b)  follows  from  the fact  that the term $(|\overline {{w_k}} ||\overline {{w_t}} |)$
takes  the maximum value if $|\overline {{w_k}} |=|\overline {{w_t}} |$, which implies $|\overline {{w_1}}|^2=\cdots=|\overline {{w_M}} |^2=\frac{1}{M}$ because  $\sum\limits_{m=1}^{M}| \overline{w_m}|^2=1$,
step (c) is obtained by using ${\varphi _{_{k,n}}} - {\varphi _{_{t,n}}} = {\varphi _{{H_{\text{b},\text{r},k,n}}}} - {\varphi _{{H_{\text{b},\text{r},t,n}}}} + {\varphi _{{w_k}}} - {\varphi _{{w_t}}}= \text{const}{_{k,t,n}} + {\varphi _{{w_k}}} - {\varphi _{{w_t}}}$ and  defining ${\varphi _{{w_k}}} - {\varphi _{{w_t}}}:={\varphi _{{w_k}-{w_t}}}$, and  step (d) is derived by defining  $ f_\text{c}({{{w_k}}, {{w_t}}}):=\text{cos}(\text{const}{_{k,n}} + {\varphi _{{w_k} - {w_t}}})$.

As for $\mathbb{E}({A}^2)$, we have
\begin{subequations}\label{L-2h11}
\begin{align}
\mathbb{E}(A)&\mathop =\limits^{(a)}\mathbb{E}\left( {\sum_{n=1}^{N}}{|h_{\text{r},1,n}^{H}}|\left| {{\sum_{m=1}^{M}}{{H}_{\text{b},\text{r},m,n}}  \overline{w_m} }\right|\right) 
\mathop=\limits^{(b)}\frac{\beta_r \sqrt{ \pi}}{2}\left({\sum_{n=1}^{N}} |C_n|\right),\\
\mathbb{E}({A}^2)
    &=\mathbb{E}\left(\left( {\sum\limits_{n=1}^{N}}{|h_{\text{r},i,n}^{H}}| |C_n| \right)^2\right) \nonumber\\
    &=\mathbb{E}\left( {\sum\limits_{n=1}^{N}}\left({|h_{\text{r},i,n}^{H}}|^2 |C_n|^2 \right)+ 2\sum\limits_{n=1}^{N}\sum\limits_{i=n+1}^{N}|h_{\text{r},i,n}^{H}||C_n||h_{\text{r},1,i}^{H}||C_i| \right) \nonumber \\
    &=\mathbb{E}\left(|h_{\text{r},1,1}^{H}|^2\right){\sum\limits_{n=1}^{N}}\left(|C_n|^2\right) + 2\mathbb{E}^2\left(|h_{\text{r},1,1}^{H}|\right)\sum\limits_{n=1}^{N}\sum\limits_{i=n+1}^{N}|C_n||C_i| \nonumber\\
    &\mathop =\limits^{(d)}{\sum\limits_{n=1}^{N}}\left(\beta_r^2 \left(|C_n|^2\right) \right)+ 2\sum\limits_{n=1}^{N}\sum\limits_{i=n+1}^{N}\frac{ (\beta_\text{r})^2 {{\pi }}  }{4}|C_n||C_i|,
\end{align}
\end{subequations}
where step (a)~follows from  $\arg(\bm{h_{\text{r},1}}^{{H}}{\boldPhi}{\bm{H}_{\text{b},\text{r}}} \overline{\bm{w}})=\varphi_0$, 
 and steps~(b) and (c)~follow from the fact that $| {h_{\text{r},1,1}}^{H}|$  has a  Rayleigh distribution with mean  ${ \beta_r {\sqrt{\pi }}  }/{2}$ and variance ${\beta_r^2}/{2} (2-{\pi}/{2})$.

Similarly, as for $\mathbb{E}({B}^2)$, we have
\begin{equation}
\setlength{\abovedisplayskip}{2pt plus 1pt minus 1pt}
\setlength{\belowdisplayskip}{2pt plus 1pt minus 1pt}
\setlength\abovedisplayshortskip{3pt plus 1pt minus 1pt}
\setlength\belowdisplayshortskip{3pt plus 1pt minus 1pt}
\begin{aligned}\nonumber
\mathbb{E}({B^2}) &= \mathbb{E}\left( {{{\left| {\sum\limits_{m = 1}^M {{h_{\text{b},1,m}}} \overline {{w_m}} } \right|}^2}} \right){\rm{   }} 
 = \mathbb{E}\left( {{{\left| {\sum\limits_{m = 1}^M {|{h_{\text{b},1,m}}} ||\overline {{w_m}} |\exp(j({\varphi _{{h_{\text{b},1,m}}}} + {\varphi _{{w_m}}}))} \right|}^2}{\rm{ }}} \right){\rm{ }}\vspace{10pt}\\
 &=\mathbb{E}\left( {{{\left| {\sum\limits_{m = 1}^M | {{h_{\text{b},1,m}}}|| \overline {{w_m}} |\cos({{\varphi _{{h_{\text{b},1,m}}}}+\varphi _{w_m}}) + j\sum\limits_{m = 1}^M | {{h_{\text{b},1,m}}}|| \overline {{w_m}} |\sin({\varphi _{{h_{\text{b},1,m}}}}+{\varphi_{w_m}})} \right|}^2}} \right)\vspace{10pt} \\
&=\mathbb{E}\left( {{{ {\left(\sum\limits_{m = 1}^M | {{h_{\text{b},1,m}}}|| \overline {{w_m}} |\cos({\varphi _{{h_{\text{b},1,m}}}}+{\varphi _{w_m}})\right)^2 + \left(\sum\limits_{m = 1}^M | {{h_{\text{b},1,m}}}|| \overline {{w_m}} |\sin({\varphi _{{h_{\text{b},1,m}}}}+{\varphi _{w_m}})\right)^2 }}}} \right)\vspace{10pt} \\
&=  \mathbb{E}\left( {\sum\limits_{m = 1}^M | {{h_{\text{b},1,m}}}|^2| \overline {{w_m}} {|^2}(\cos{^2}({\varphi _{{h_{\text{b},1,m}}}}+{\varphi _{w_m}}) + {{\sin }^2}{(\varphi _{{h_{\text{b},1,m}}}}+{\varphi _{w_m}}))} \right)\vspace{10pt} \\
&\quad+ \mathbb{E}\left( { 2\sum\limits_{k = 1}^M {\sum\limits_{t = k + 1}^M {(| {{h_{\text{b},1,k}}}||\overline {{w_k}} || {{h_{\text{b},1,t}}}||\overline {{w_t}} |(\cos({\varphi _{{h_{\text{b},1,k}}}}-{\varphi _{{h_{\text{b},1,t}}}}+{\varphi _{w_k}} - {\varphi _{w_t}})))} } } \right)\vspace{10pt} \\
&\mathop =\limits^{(a)}\left(\frac{{{{({\beta _\text{b}})}^2}\pi }}{4}+ \frac{\beta_\text{b}^2}{2}(2-\frac{\pi}{2})\right)\left( {\sum\limits_{m = 1}^M | \overline {{w_m}} {|^2}} \right)+ \frac{{{{({\beta _\text{b}})}^2}\pi }}{4}\left( { 2\sum\limits_{k = 1}^M {\sum\limits_{t = k + 1}^M {(|\overline {{w_k}} ||\overline {{w_t}} |((\cos({\varphi_{h,k,t}+\varphi_{{w_k} - {w_t}}})))} } } \right)
\end{aligned}
\end{equation}

\begin{equation}\label{L-2h21}
\setlength{\abovedisplayskip}{2pt plus 1pt minus 1pt}
\setlength{\belowdisplayskip}{2pt plus 1pt minus 1pt}
\setlength\abovedisplayshortskip{3pt plus 1pt minus 1pt}
\setlength\belowdisplayshortskip{3pt plus 1pt minus 1pt}
\begin{aligned}
&\mathop \le \limits^{(b)} \frac{{{{({\beta _b})}^2}\pi }}{4}\left( {1 + \frac{2}{M}\sum\limits_{k = 1}^M {\sum\limits_{t = k + 1}^M {((\cos({\varphi_{h,k,t}+\varphi_{{w_k} - {w_t}}})))} } } \right)+\frac{\beta_b^2}{2}(2-\frac{\pi}{2}) \\
&\mathop =\limits^{(c)}\frac{{{{({\beta _b})}^2}\pi }}{4}\left( {1 + \frac{2}{M}\sum\limits_{k = 1}^M {\sum\limits_{t = k + 1}^M {{f_{\rm{b}}}({{{w_k}},  {{w_t}}})} } } \right)+\frac{\beta_\text{b}^2}{2}(2-\frac{\pi}{2}),
\end{aligned}
\end{equation}
where step (a)~is derived by   the fact that $| {h_{\text{b},1,1}}^{H}|$  has a  Rayleigh distribution with mean  $\frac{ \beta_b {\sqrt{\pi }}  }{2}$ and defining ${\varphi _{{h_{\text{b},1,k}}}}-{\varphi _{{h_{\text{b},1,t}}}}+{\varphi _{w_k}} - {\varphi _{w_t}}:={\varphi_{h,k,t}+\varphi_{{w_k} - {w_t}}}$, step (b)  follows  from  the fact  that the term $(|\overline {{w_k}} ||\overline {{w_t}} |)$
takes  the maximum value if $|\overline {{w_k}} |=|\overline {{w_t}} |$ which implies $|\overline {{w_1}} |^2=\ldots=|\overline {{w_M}} |^2=\frac{1}{M}$ because  $\sum\limits_{m=1}^{M}| \overline{w_m}|^2=1$, 
and step (c) is derived by defining  $ f_\text{b}({{{w_k}}, {{w_t}}}):=\cos({\varphi_{h,k,t}+\varphi_{{w_k} - {w_t}}})$.

As for $\mathbb{E}(AB)$, since $\mathbb{E}^2(B) \le \frac{\pi \beta_b^2 M}{4}$  based on (\ref{L-2h2}),  we have
\begin{equation}\label{L-ab}
\setlength{\abovedisplayskip}{2pt plus 1pt minus 1pt}
\setlength{\belowdisplayskip}{2pt plus 1pt minus 1pt}
\setlength\abovedisplayshortskip{3pt plus 1pt minus 1pt}
\setlength\belowdisplayshortskip{3pt plus 1pt minus 1pt}
\begin{aligned}
&\mathbb{E}(AB)=\sqrt{\mathbb{E}^2(A)\mathbb{E}^2(B)}\le\left({\sum_{n=1}^{N}} {\sqrt {{{{{({\beta _{\text{b,r}}})}^2}}}\left( {1 + \frac{2}{M}\sum\limits_{k = 1}^M {\sum\limits_{t = k + 1}^M {f_\text{c}({{{w_k}}, {{w_t}}})} } } \right)} }\right) \frac{\pi\beta_r{ \beta_b \sqrt{M}}}{4}.
\end{aligned}
\end{equation}
Based on (\ref{L-2c1})-(\ref{L-ab}), we evince that the upper bounds of $\mathbb{E}({A}^2)$, $\mathbb{E}({B}^2)$ and $\mathbb{E}(AB)$ are related to variables ${{w_1}},\ldots,{ {w_M}}$.  Therefore, the upper bound of $(\mathbb{E}({A}^2)+\mathbb{E}({B})^2+2\mathbb{E}({AB}))$ is related to  variables ${{w_1}},\ldots,{ {w_M}}$, which is denoted by $f_1({{w_1}},\ldots,{ {w_M}})$ and  can be written as follows
\begin{equation}\label{L-2h51}
\setlength{\abovedisplayskip}{2pt plus 1pt minus 1pt}
\setlength{\belowdisplayskip}{2pt plus 1pt minus 1pt}
\setlength\abovedisplayshortskip{3pt plus 1pt minus 1pt}
\setlength\belowdisplayshortskip{3pt plus 1pt minus 1pt}
\begin{aligned}
  & \mathbb{E}({A}^2)+\mathbb{E}({B})^2+2\mathbb{E}({AB})\le f_1({{w_1}},\ldots,{ {w_M}})    \\
  &=\left({\sum_{n=1}^{N}} {\sqrt {{{({\beta _{\text{b,r}}})}^2}\left( {1 + \frac{2}{M}\sum\limits_{k = 1}^M {\sum\limits_{t = k + 1}^M {f_\text{c}({{{w_k}}, {{w_t}}})} } } \right)} }\right)^2 \frac{\beta_r^2 \pi}{4}   \\
   &\quad +\frac{\beta_r^2}{2}(2\hspace{-1pt}-\hspace{-1pt}\frac{\pi}{2})\hspace{-1pt}\times\hspace{-1pt} \left({\sum_{n=1}^{N}} {{{{({\beta _{\text{b,r}}})}^2}\left( {1 + \frac{2}{M}\sum\limits_{k = 1}^M {\sum\limits_{t = k + 1}^M {f_\text{c}({{{w_k}}, {{w_t}}})} } } \right)} }\right)
   +\frac{{{{({\beta _b})}^2}\pi }}{4}\left( {1 + \frac{2}{M}\sum\limits_{k = 1}^M {\sum\limits_{t = k + 1}^M {{f_{\rm{b}}}({{{w_k}}, {{w_t}}})} } } \right)\\
   &\quad+\frac{\beta_b^2}{2}(2-\frac{\pi}{2})+2\left({\sum_{n=1}^{N}} {\sqrt {{{({\beta _{\text{b,r}}})}^2}\left( {1 + \frac{2}{M}\sum\limits_{k = 1}^M {\sum\limits_{t = k + 1}^M {f_\text{c}({{{w_k}}, {{w_t}}})} } } \right)} }\right) \frac{\pi\beta_r{ \beta_b \sqrt{M}}}{4}.
\end{aligned}
\end{equation}

Thus, ${Q_1}$ is the maximum value of $f_1({{w_1}},\ldots,{ {w_M}})$, which can be formulated as follows
\begin{equation}\label{L-Q11}
\setlength{\abovedisplayskip}{2pt plus 1pt minus 1pt}
\setlength{\belowdisplayskip}{2pt plus 1pt minus 1pt}
\setlength\abovedisplayshortskip{3pt plus 1pt minus 1pt}
\setlength\belowdisplayshortskip{3pt plus 1pt minus 1pt}
{Q_1} = \mathop {\max }\limits_{{{w_1}},\ldots,{ {w_M}}} f_1({{w_1}},\ldots,{ {w_M}}).
\end{equation}

Based on (\ref{L-hp}),
a lower bound for the average  transmit power at the BS is given by
\begin{equation}\label{L-0111}
\setlength{\abovedisplayskip}{2pt plus 1pt minus 1pt}
\setlength{\belowdisplayskip}{2pt plus 1pt minus 1pt}
\setlength\abovedisplayshortskip{3pt plus 1pt minus 1pt}
\setlength\belowdisplayshortskip{3pt plus 1pt minus 1pt}
\begin{aligned}
P_{K=1,M>1}^L&=\dfrac{{\sigma}^2\gamma}{Q_1}
=\dfrac{{\sigma}^2\gamma}{ \mathop {\max }\limits_{{{w_1}},\ldots,{ {w_M}}} f_1({{w_1}},\ldots,{ {w_M}})}.
\end{aligned}
\end{equation}


  Since the value of $f_1({{w_1}},\ldots,{ {w_M}})$ depends on  $ f_\text{c}({{{w_k}}, {{w_t}}}):=((\text{cos}(\text{const}{_{k,n}} + {\varphi _{{w_k} - {w_t}}})))$ and $ f_\text{b}({{{w_k}}, {{w_t}}}):=\text{cos}( {\varphi _{{w_k} - {w_t}}})$  based on (\ref{L-2h51}), $f_1$ depends on the phase difference of two antennas  at the BS (i.e., ${\varphi _{{w_k} - {w_t}}}$) which ranges from 0 to $2\pi$. Therefore, we  utilize a brute-force method to find  phase differences that maximize the value of $f_1$. This means that a lower bound for the average transmit power for such case  is in a semi-analytical form.

\textbf{Case 2): $K>1 $ and $M>1$}

By using the inequality ${\min ( {\mathbb{E}( \max({| \bm{h_1}^{H}(\boldPhi)  \overline{\bm{w}} |^2}))},\ldots,\mathbb{E}(\max({| \bm{h_i}^{H}(\boldPhi)  \overline{\bm{w}} |^2})),\ldots,  {\mathbb{E}(\max({| \bm{h_K}^{H}(\boldPhi)  \overline{\bm{w}} |^2} ))})}$ \\
$\le {\min({Q_1,\ldots,Q_i,\ldots,Q_K})}$ and  (\ref{L-Q11}),  a lower bound for the average  transmit power at the BS is given by
\begin{equation}\label{L-0021}
\setlength{\abovedisplayskip}{2pt plus 1pt minus 1pt}
\setlength{\belowdisplayskip}{2pt plus 1pt minus 1pt}
\setlength\abovedisplayshortskip{3pt plus 1pt minus 1pt}
\setlength\belowdisplayshortskip{3pt plus 1pt minus 1pt}
\begin{aligned}
 P_{K>1,M>1}^L&= \frac{\gamma\sigma^2}{\min({Q_1,Q_2,\ldots,Q_K})}
=\frac{\gamma\sigma^2}{\mathop {\max }\limits_{{{w_1}},\ldots,{ {w_M}}}\min({{ f_1,f_2,\ldots,f_K}})}.
\end{aligned}
\end{equation}
 (\ref{L-2c1})--(\ref{L-Q11}) allow us to compute $f_1$.  A similar approach can be used to compute other values of  $f_i$ ($i=1,\ldots,K$).
 Based on (\ref{L-Q11}), we can derive that ({\ref{L-0021}})  depends on the phase difference of two antennas  at the BS  which  ranges from 0 to $2\pi$. We can also utilize a brute-force method  to find  phase differences that maximize the value of
  $\min({{ f_1,f_2,\ldots,f_K}})$.

\begin{rem}[{\it{Relation between the analytical lower bound and the
semi-analytical lower  bound}}]
{\rm{
{\color{black}When all of the $\cos$ terms in step (c) of (31) and step (b) of (33) equal the maximum value $1$, terms $|{C_n}{|}$ and  $\mathbb{E}(B^2)$ in the  \mbox{semi-analytical} lower bound  are equal to those in  the  \mbox{analytical} lower bound, respectively. Then, it is easy to derive that the \mbox{ semi-analytical} lower bound  is equal to the  \mbox{analytical} lower bound. In turn, if all of the $\cos$ terms in step (c) of (\ref{L-2c1}) and step (b) of (\ref{L-2h21}) equal the maximum value $1$, we have ${\varphi_{h,k,t}+\varphi _{{w_k}}} - {\varphi _{{w_t}}}=0$ and  $\text{const}{_{k,n}} + {\varphi _{{w_k}}} - {\varphi _{{w_t}}}=0$, where  $\varphi_c$ denotes the angle of a complex number $c$. We derive that $ \text{const}{_{k,t,n}}=\varphi_{h,k,t}~    \text{i.e.},~ {\varphi _{{H_{\text{b},\text{r},k,n}}}} - {\varphi _{{H_{\text{b},\text{r},t,n}}}}={\varphi _{{h_{\text{b},1,k}}}}-{\varphi _{{h_{\text{b},1,t}}}}$ where $\bm{H}_{\text{b},\text{r},k({{or}}~ t),n}$ denotes  the channel  response between  the $(k(or~ t))^{th}$ antenna of the BS  and the $n^{th}$ element at the RIS. 
However, based on the full-rank LoS channel model in (\ref{eqH}) and the  uncorrelated Rayleigh fading channel between the BS and each ME, this does not hold. Hence, it is impossible to ensure that all of the $\cos$ terms in step (c) of (\ref{L-2c1}) and step (b) of (\ref{L-2h21}) equal the maximum value $1$, which indicates that the semi-analytical bound can never equal the analytical bound. Compared to the \mbox{analytical} lower bound, the  semi-analytical lower bound is closer to the simulation results.  However, the \mbox{analytical} lower bound can intuitively show  advantages of the RIS.
}}}
\end{rem}

\subsection{Analytical  lower bound for the average  transmit power with random phase shifts at the RIS}

\textbf{Case 1):
$K=1$ and $M>1$}

Similar to  RIS aided systems,  the maximum value of  $ \mathbb{E}(\max(|\bm{h}_1^{H}(\boldPhi)  \overline{\bm{w}}|^2))$ with respect to  $\overline{\bm{w}}$  is~$ Q_1 ={\mathop {\max }\limits_{\overline{
\bm{w}}}}(\mathbb{E}({A}^2)+2\mathbb{E}({AB})+\mathbb{E}({B}^2))$. We discuss  how to compute  each term of  $Q_1$.

Recalling ${A}=|\bm{h_{\text{r},1}}^{{H}}{\boldPhi}{\bm{H}_{\text{b},\text{r}}} \overline{\bm{w}}|={\left|{\sum\nolimits_{n=1}^{N}}{h_{r,1,n}^{H}}e^{j \theta_1^{(n)}}{{\sum\nolimits_{m=1}^{M}}{{H}_{\text{b},\text{r},m,n}}  \overline{w_m} }\right|}={\left|{\sum\nolimits_{n=1}^{N}}{h_{\text{r},1,n}^{H}}e^{j \theta_1^{(n)}} C_n\right|} $, we obtain that
 ${\sum\nolimits_{n=1}^{N}}{h_{\text{r},1,n}^{H}}e^{j \theta_1^{(n)}} C_n$ has a circularly-symmetric complex Gaussian distribution with mean $0$  and variance $(|C_1|^2+\cdots+|C_N|^2)\textup{Var}(h_{r,1,1})=(|C_1|^2+\cdots+|C_N|^2)\beta_r^2$. Then, $A$ follows a Rayleigh distribution with mean $\frac{ \sqrt{|C_1|^2+\cdots+|C_N|^2}\beta_r {\sqrt{\pi }}  }{2}$ and variance $\frac{(|C_1|^2+\cdots+|C_N|^2)\beta_r^2}{2} (2-\frac{\pi}{2})$. Hence,  we have
\begin{equation}\label{L-2h13}
\setlength{\abovedisplayskip}{2pt plus 1pt minus 1pt}
\setlength{\belowdisplayskip}{2pt plus 1pt minus 1pt}
\setlength\abovedisplayshortskip{3pt plus 1pt minus 1pt}
\setlength\belowdisplayshortskip{3pt plus 1pt minus 1pt}
\begin{aligned}
\mathbb{E}(A^2)&=\mathbb{E}^2({A})+\textup{Var}(A)
               =(\frac{ \sqrt{|C_1|^2+\cdots+|C_N|^2}\beta_r {\sqrt{\pi }}  }{2})^{2}+\frac{(|C_1|^2+\cdots+|C_N|^2)\beta_r^2}{2} (2-\frac{\pi}{2})\\
               &=(|C_1|^2+\cdots+|C_N|^2)\beta_r^2\left(\frac{\pi}{4} + \frac{1}{2}(2-\frac{\pi}{2})\right)
               \mathop  \le \limits^{(a)}{{NM}{\beta_{\text{b,r}}}^2}\beta_r^2,
\end{aligned}
\end{equation}
where step  (a) derives  from ({\ref{Cn}}), i.e.,  $|C_n|^2\le {{M}{\beta_{\text{b,r}}}^2} \ \ (n=1,\ldots,N)$.

 $\mathbb{E}({B}^2)$ is the same as we derived in ({\ref{L-2h2}}), i.e., $\mathbb{E}(B^2)
\leq \frac{\pi \beta_b^2 M}{4}+ \frac{\beta_b^2}{2}(2-\frac{\pi}{2})$.

As for $\mathbb{E}(AB)$, since $\mathbb{E}^2(B) \le \frac{\pi \beta_b^2 M}{4}$  based on (\ref{L-2h2}),  we have
\begin{equation}\label{L-2h33}
\setlength{\abovedisplayskip}{2pt plus 1pt minus 1pt}
\setlength{\belowdisplayskip}{2pt plus 1pt minus 1pt}
\setlength\abovedisplayshortskip{3pt plus 1pt minus 1pt}
\setlength\belowdisplayshortskip{3pt plus 1pt minus 1pt}
\begin{aligned}
&\mathbb{E}(AB)=\sqrt{\mathbb{E}^2(A)\mathbb{E}^2(B)} \le \frac{\sqrt{N}\pi \beta_r {\beta_{\text{b,r}}} \beta_b  M}{4}. \\
\end{aligned}
\end{equation}

When  $|\overline{w_1}|=\cdots=|\overline{w_M}|$  and $|C_1|=\cdots=|C_N|$, (\ref{L-2h13}), ({\ref{L-2h2}}), and (\ref{L-2h33})  hold with  the equality  sign. Then, $Q_1$ can be formulated as  follows
\begin{equation}\label{L-2h53}
\setlength{\abovedisplayskip}{2pt plus 1pt minus 1pt}
\setlength{\belowdisplayskip}{2pt plus 1pt minus 1pt}
\setlength\abovedisplayshortskip{3pt plus 1pt minus 1pt}
\setlength\belowdisplayshortskip{3pt plus 1pt minus 1pt}
\begin{aligned}
   Q_1  &=\max_{\overline{
\bm{w}}}(\mathbb{E}({A}^2)+2\mathbb{E}({AB})+\mathbb{E}({B}^2))
   ={{NM}{\beta_{\text{b,r}}}^2}\beta_r^2 +\frac{\sqrt{N}\pi \beta_r {\beta_{\text{b,r}}} \beta_b  M}{2}+\frac{\pi \beta_b^2 M}{4}+ \frac{\beta_b^2}{2}(2-\frac{\pi}{2}).
\end{aligned}
\end{equation}
Based on  (\ref{L-hp}), a lower bound for the average  transmit power at the BS is given by
\begin{equation}\label{L-0113}
\setlength{\abovedisplayskip}{2pt plus 1pt minus 1pt}
\setlength{\belowdisplayskip}{2pt plus 1pt minus 1pt}
\setlength\abovedisplayshortskip{3pt plus 1pt minus 1pt}
\setlength\belowdisplayshortskip{3pt plus 1pt minus 1pt}
\begin{aligned}
 P_{K=1,M>1}^L&=\dfrac{{\sigma}^2\gamma}{Q_1}
=\dfrac{{\sigma}^2\gamma}{{{NM}{\beta_{\text{b,r}}}^2}\beta_r^2 +\frac{\sqrt{N}\pi \beta_r {\beta_{\text{b,r}}} \beta_b  M}{2}+\frac{\pi \beta_b^2 M}{4}+ \frac{\beta_b^2}{2}(2-\frac{\pi}{2})}.
\end{aligned}
\end{equation}
(\ref{L-0113}) confirms that the average transmit power  with random phase shifts at the RIS nearly scales with
 $1/{N}$.

\textbf{Case 2): $K>1 $ and $M>1$}

The lower bound expression for such case is the same as ({\ref{L-002}}), where the value of  $Q_i$ ($i=1,\ldots,K$) can be computed  by utilizing the same method used for obtaining $Q_1$  in (\ref{L-2h53}).

\subsection{Analytical lower bound for the average transmit power  without the  RIS}

\textbf{Case 1): $K=1$ and $M>1$}

For such case, there exist only the direct channels between the BS and MEs. Hence, we have $Q_1={\mathop {\max }\limits_{\overline{
\bm{w}}}} ~\mathbb{E}({B}^2)$. Based on ({\ref{L-2h2}}), we have $Q_1={\mathop {\max }\limits_{\overline{
\bm{w}}}} ~\mathbb{E}(B^2)
= \frac{\pi \beta_b^2 M}{4}+ \frac{\beta_b^2}{2}(2-\frac{\pi}{2})$.
 Then,  a lower bound for the average  transmit power at the BS  is
\begin{equation}\label{L-0114}
\setlength{\abovedisplayskip}{2pt plus 1pt minus 1pt}
\setlength{\belowdisplayskip}{2pt plus 1pt minus 1pt}
\setlength\abovedisplayshortskip{3pt plus 1pt minus 1pt}
\setlength\belowdisplayshortskip{3pt plus 1pt minus 1pt}
\begin{aligned}
P_{K=1,M>1}^L&=\dfrac{{\sigma}^2\gamma}{Q_1}
=\dfrac{{\sigma}^2\gamma}{\frac{\pi \beta_b^2 M}{4}+ \frac{\beta_b^2}{2}(2-\frac{\pi}{2})}.
\end{aligned}
\end{equation}
\textbf{Case 2): $K>1 $ and $M>1$}

The lower bound expression for such case is the same as ({\ref{L-002}}), where the value of  $Q_i$ ($i=1,\ldots,K$) can be computed  by utilizing the same method used for obtaining $Q_1$  in ({\ref{L-2h2}}).

 
{\color{red} \begin{rem}[{\it{What happens if the  CSI is in error}}]
{\rm{ 
For Problem (P1), the
SNR constraint (i.e.,  the 
QoS  requirement) is represented by the  CSI.
If the estimated CSI is in error, the  CSI can be represented by  the sum of the estimated CSI and the estimation error. However, we can  only obtain  the  estimated CSI  and  cannot  know the estimation  error in practice.
Therefore, to formulate the problem under imperfect CSI estimation, 
the key is to  tackle the CSI estimation error in SNR and  ensure the QoS requirement at the same time. Then, the  optimization algorithms can be designed to solve the optimization problem under imperfect CSI estimation.
Authors in~\cite{2-7} have dealt with how to tackle such problem, but they only considered the imperfect RIS-ME CSI estimation and assumed perfect BS-RIS and BS-ME CSI estimation. The constraints under  imperfect RIS-ME, BS-RIS, and RIS-ME CSI estimations are more complex than the case in~\cite{2-7}, which are left for our future work.
The  performance  of  the  proposed optimization algorithms and the derived lower bounds in this paper can be viewed as  benchmarks.
}}
\end{rem}}

{\color{red}\begin{rem}[{\it{Differences between our paper, \cite{R1} and the single-ME optimization problem in~\cite{wu2018intelligentfull}}}]
{\rm{ Both our paper, \cite{R1}, and~\cite{wu2018intelligentfull} aim to minimize the transmit power at the BS.  The main differences between our paper and~\cite{R1}~\cite{wu2018intelligentfull} are discussed as follows.


Both \cite{R1} and our paper aim to minimize the  transmit power at the BS over broadcasting signals 1) but for different RIS-aided systems. More specifically,~\cite{R1} was  for symbiotic radio  system to support passive Internet of Things (IoT), where the RIS is not only to reflect signals from the BS but also needs to transmit data to other MEs. In contrast, our paper is to support active IoT where the RIS is only used  to reflect signals from the BS.
~2) Authors in~\cite{R1} decoupled the original problem into two subproblems and  utilized the SDR method to solve them, which is similar to our proposed SDR-based method but our approach is more general. The reason is that it is the action of the RIS transmitting  data to other MEs  in~\cite{R1} that  makes the constraints for  decoding the transmitted signal at the BS in~\cite{R1} be  a particular case of the constraints in our problem. 3) To reduce the computational complexity and improve the performance of the SDR-based optimization algorithm, we  also  propose an SCA-based alternating  optimization  algorithm.
4) Though authors in \cite{R1} solved the problem of  discrete phase shifts, the discrete phase shifts were  obtained by just quantifying the derived continuous phase shifts.
 Therefore, the key is to design  the optimization algorithms for the continuous phase shifts problem in~\cite{R1}. Our paper considers the continuous phase shifts. 
 We can  obtain the discrete  phase  shifts by quantifying the derived continuous phase  shifts or other optimization algorithms, which are left for our future work. The  performance of the proposed optimization algorithms with continuous phase shifts in this paper can be viewed as a benchmark.
Furthermore, we  derive two lower bounds of the average transmit power at the BS  in our paper further to analyze  the effectiveness of our proposed optimization algorithms. In contrast, authors in~\cite{R1} did not present  the  lower bounds to  analyze the effectiveness of their proposed optimization algorithm. The method of deriving the lower bounds can be utilized for the problem in~\cite{R1}. The reason is that the lower bounds for the  problem in~\cite{R1} and our problem are determined by the constraints, and the constraints in~\cite{R1} are the special case of the constraints in our paper.

For \cite{wu2018intelligentfull}  and our paper,  the RIS is only used to  reflect  signals  from  the  BS  to  MEs. 1) From the perspective of  the problem formulation, the  only  difference  is considering  the  SNR  constraints  at  multiple  MEs in our paper instead  of the SNR  constraint at  single  ME in the single-ME optimization problem of~\cite{wu2018intelligentfull}. However, this difference induces  significant differences  in problem-solving between our paper and~\cite{wu2018intelligentfull}. 
Specifically, for the single-ME optimization problem in~\cite{wu2018intelligentfull}, the maximum-ratio transmission is the
optimal transmit beamforming solution to this single-ME optimization problem  when given phase shifts,   and the phase shifts at the RIS can be tuned to achieve the maximum
combined channel  gains for  a signal ME   at
the same time. Based on these observations,  authors in ~\cite{wu2018intelligentfull} proposed two approaches to solve the single-ME optimization problem, which cannot be utilized  to solve our problem for multiple MEs. The reason is that, for multiple MEs, our problem  is NP-hard  when  given phase shifts~\cite{complexity}, and the phase shifts cannot be tuned to achieve the maximum
combined channel  gains for  all MEs   at
the same time because all MEs with  different channels share the same phase shifts. The single-ME optimization problem in~\cite{wu2018intelligentfull} is a particular case of our problem (i.e., $K=1$), and thus our proposed optimization algorithms can solve the single-ME optimization problem in~\cite{wu2018intelligentfull}   as well, which means that our methods are more general.
2) Furthermore,   we  derive analytical  and semi-analytical  lower  bounds  for the  average  transmit  power at the BS for the general case (i.e.,  multiple  MEs  and a  multi-antenna  BS) to analyze the effectiveness of the proposed optimization algorithms.
Authors in~\cite{wu2018intelligentfull} just presented the analytical result of scaling law of the transmit power at the BS with  the number of elements at the RIS 
 based  on  a particular case  (i.e.,  a  single  ME  and a  single-antenna  BS),  aiming to obtain  an  insight into the advantage of the RIS. 3) In addition, we present the detailed procedure of  how to derive lower bounds based on the LoS BS-to-RIS channel and i.i.d Rayleigh fading BS-to-ME and RIS-to-ME channels and briefly show  how to extend our derived lower bounds to the generalized spatially correlated Rician fading channels. The analytical result in~\cite{wu2018intelligentfull}  is only based on i.i.d Rayleigh fading  BS-RIS and RIS-ME channel and ignoring the BS-ME channel. Thus,  the method of deriving lower bounds in our paper is more general. 

}}

\end{rem}}


%
%

\section{Simulation results} \label{sec:simulation}

In  this section, we  utilize  numerical results to validate the   proposed optimization algorithms  and the    derived lower bounds. We assume that  a BS with a uniform linear array of antennas is located at $(0,0,0)$,  and  an RIS with a uniform linear array of RIS units is located at $(0,50,0)$. The inter-antenna and inter-unit separation at the BS and the RIS are a half of the wavelength. The purpose of deploying the RIS is to improve the signal strength.  To illustrate this benefit, we assume that the MEs are uniformly located  within the half-circle centered at the RIS with radius 3 m as shown in Fig.~\ref{simulation-lacation}, which are the cell-edge MEs.
\begin{figure}[h]
\setlength{\belowcaptionskip}{-0.5cm}
\setlength{\abovecaptionskip}{-0.1cm}
  \centering
\includegraphics[scale=0.8]{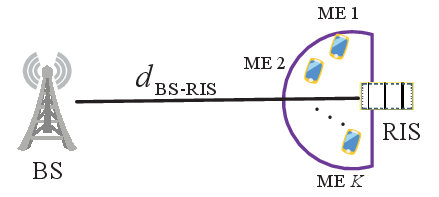}
  \caption{ The location of the RIS, BS and MEs in the simulation.
  }\label{simulation-lacation}
\end{figure}

 The channel models for the BS-RIS, BS-MEs and RIS-MEs are the same as those  we described in Section \ref{sec:analysis}. The path loss between $a$ and $b$ is $C_0(d_{a,b}/D_0)^{-\alpha}$, where~$C_0=1$~m, $D_0=-30$~dB, $d_{a,b}$ denotes the distance between $a$ and $b$, and $\alpha$ is the path loss exponent. We set $\sigma^2=$ -30 dBm,   and $\varepsilon=10^{-4}$. For the BS-RIS, RIS-MEs, and BS-MEs, we set $\alpha$  to be $2,2.8$, and $3.5$, respectively. As the baselines, we employ the conventional power control (i.e., MMSE and ZF based beamforming, 
 termed ``Without-RIS-MMSE" and ``Without-RIS-ZF", respectively, in the  figures),  power control with random phase shifts at the RIS  (termed
``Random-RIS" in the  figures), and {\color{black} a two-stage algorithm proposed in~\cite{wu2018intelligentfull}}. The MMSE-based beamforming is obtained by solving Problem (P2) and setting $\boldPhi=\bm{0}$, and the ZF-based beamforming is obtained according to ${\rm{trace}}\left( {{\rm{diag}}\left( {{{({\sigma _1})}^2}{\gamma _1},\ldots,{{({\sigma _K})}^2}{\gamma _K}} \right) \times {{\left( {\bm{H_b}^H\bm{H_b}} \right)}^{ - 1}}} \right)$~\cite{wu2018intelligentfull}. In addition, the terms ``alternating optimization algorithm based on  SDR",  ``alternating optimization algorithm based on SCA", ``analytical lower bound for the  average transmit power in  RIS aided systems", and ``semi-analytical lower bound for the average transmit power in RIS aided systems"  are abbreviated as  ``With-RIS-SDR", ``With-RIS-SCA", ``With-RIS-LB1" and ``With-RIS-LB2" in the  figures, respectively. Terms  `` analytical lower bound for the average transmit power with random phase shifts at the RIS"  and ``analytical lower bound for the transmit power without  the RIS" are abbreviated as  ``Random-RIS-LB"  and ``Without-RIS-LB" in the figures, respectively.

Note that, for the single-ME case, since the result obtained by the MMSE-based beamforming is the same as that obtained by the ZF-based beamforming, we use the term ``Without-RIS"  to denote the conventional power control scheme. Furthermore, the semi-analytical lower bound for the average transmit power depends on the phase difference between any two antennas. We utilize the brute-force method to obtain the semi-analytical lower bound with 500 quantization levels when  $M$ is small. To resolve the search explosion when  $M$ is sufficiently large, we utilize a random search method~\cite{Randomsearch}~to obtain the semi-analytical lower bound.
\begin{figure}[h]
\setlength{\belowcaptionskip}{-0.5cm}
\setlength{\abovecaptionskip}{-0.2cm}
\centering
\begin{minipage}{.5\textwidth}
  \centering
  \includegraphics[scale=0.3]{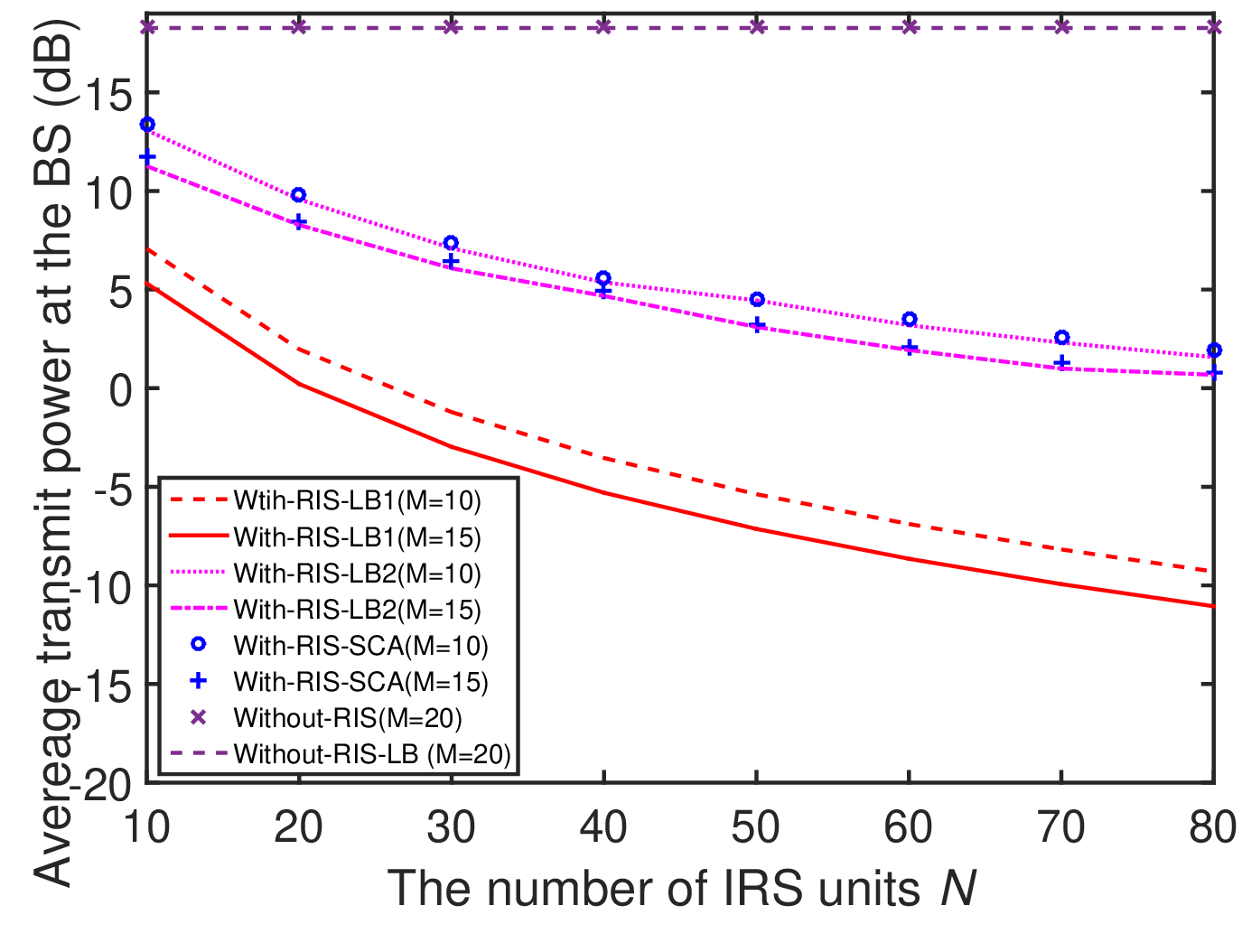}\\
  \caption{Average transmit power at the BS \\ versus   the  number of RIS units $N$  for \\ single-ME case.}
  \label{transmit-power-vs-N-signalUE}
\end{minipage}%
\begin{minipage}{.5\textwidth}
  \centering
  \includegraphics[scale=0.3]{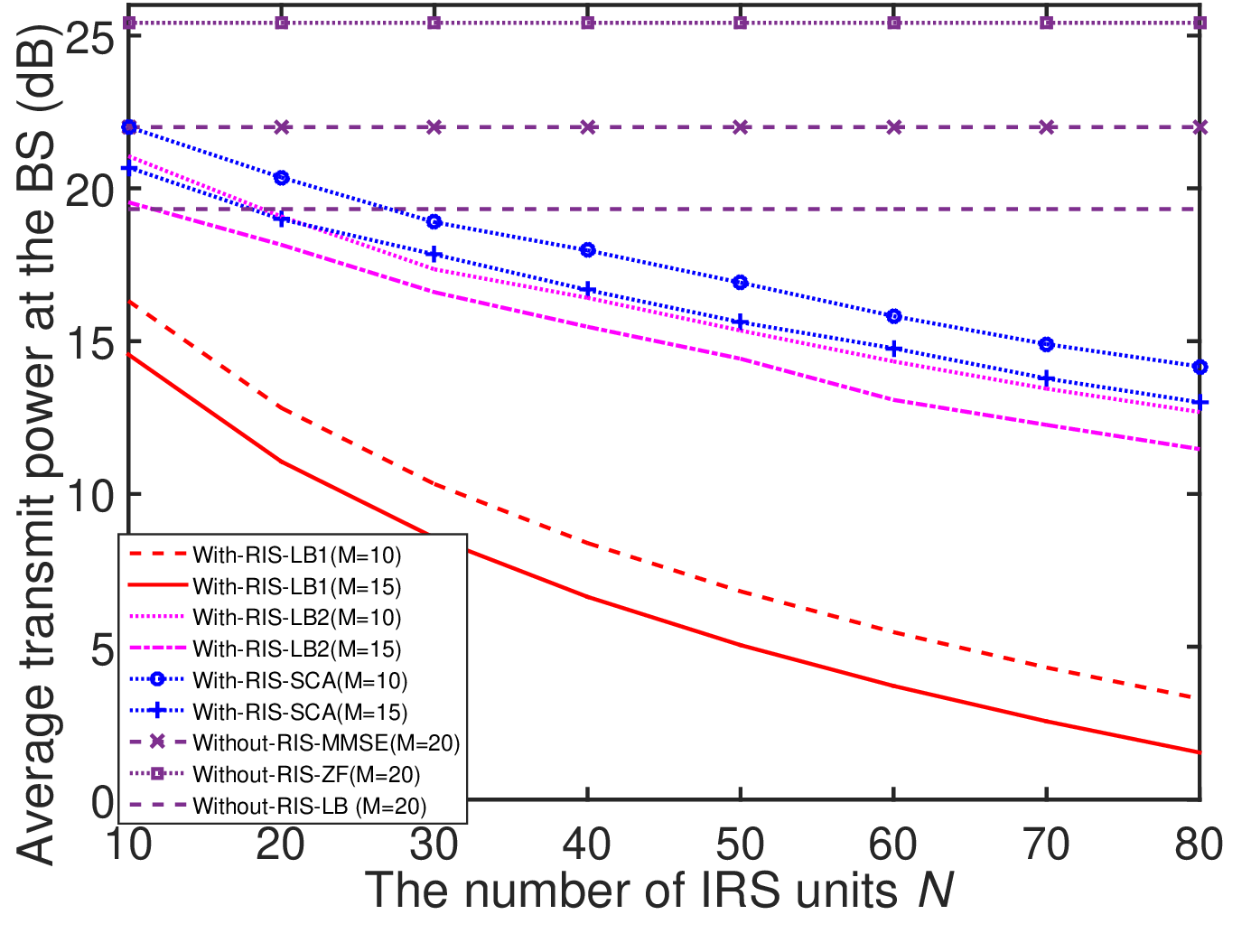}\\
  \caption{Average transmit power at the BS versus  the number of RIS units $N$  for multi-MEs case.
  }\label{transmit-power-vs-N-multiUE2}
\end{minipage}
\end{figure}

Figs.~\ref{transmit-power-vs-N-signalUE} and~\ref{transmit-power-vs-N-multiUE2} show how average transmit power at the BS changes with the number of RIS units $N$  under $\gamma=1$~dB~for $K=1$ and $K=5$, respectively. We can observe that, in  \mbox{ RIS aided}  systems,  with the increase of the number of RIS units $N$ and the number of antennas at the BS, the average transmit power decreases significantly. {\color{black} Furthermore, the simulation results  are  closer to \mbox{semi-analytical} lower bound, compared to  the  analytical lower bound which intuitively show  the transmit power scales with $1/N^2$ in the context of RIS communication systems.  The  reason  is given in Remark 1.}
 We can also observe that with the increase of the number of MEs,  the performance gap between   RIS aided systems and the  system without the RIS widens up gradually, which coincides with the
 trend obtained from the lower bounds. Interestingly, the average transmit power at the BS in RIS aided systems is much lower than the average transmit power without the RIS, even when the number of antennas at the BS in RIS aided systems~(i.e., $M=10,15$)~is less than that  without the RIS~(i.e., $M=20$). This means that \mbox{ RIS aided} systems with low-power consumption  elements   can increase the energy efficiency.
\begin{figure}
\setlength{\belowcaptionskip}{-0.4cm}
\setlength{\abovecaptionskip}{-0.3cm}
\centering
\begin{minipage}{.5\textwidth}
  \centering
  \includegraphics[scale=0.3]{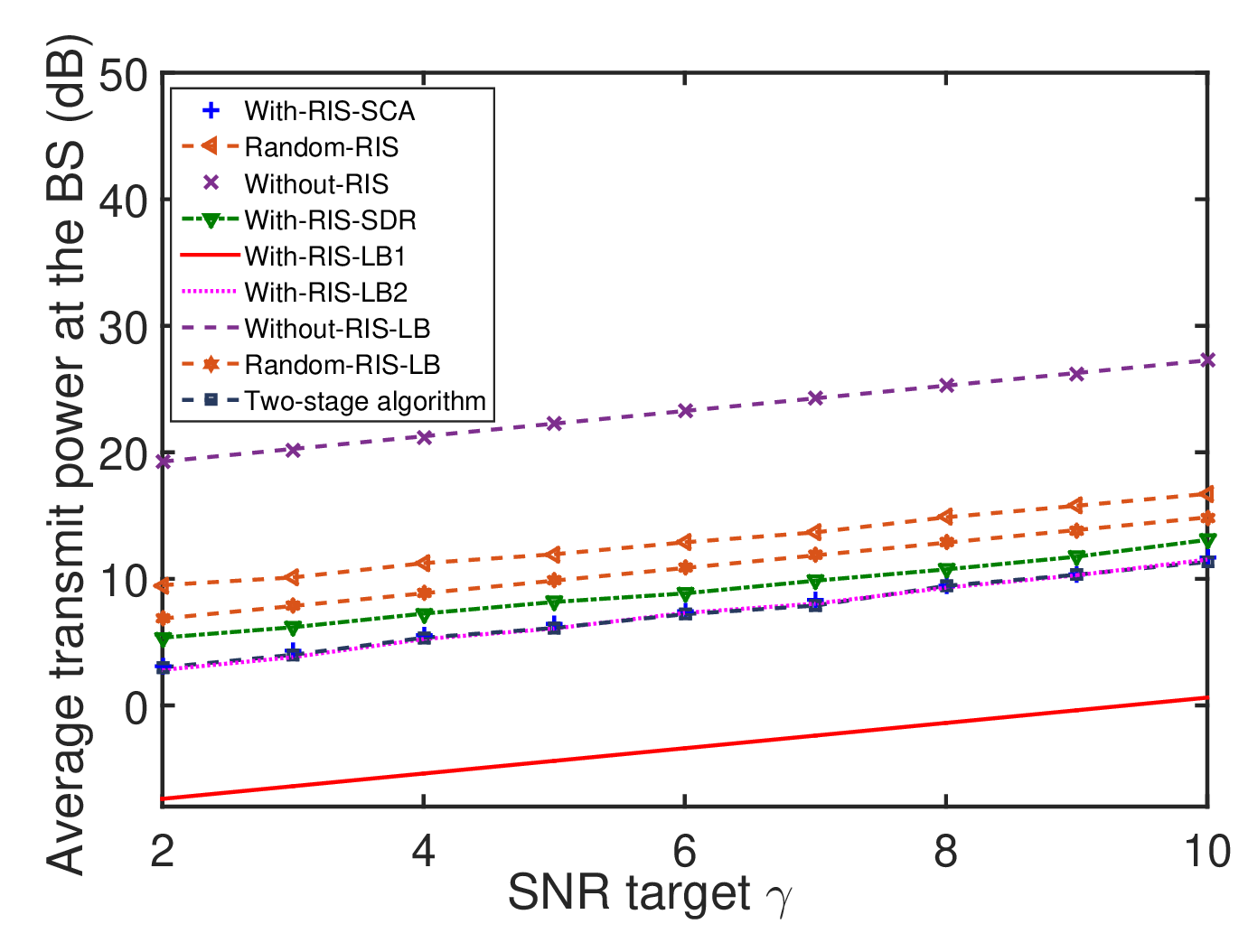}\\
  \caption{Average transmit power versus  \\  SNR target $\gamma$  for single-ME case.
  }\label{transmit-power-vs-gama-singleUE}
\end{minipage}%
\begin{minipage}{.5\textwidth}
  \centering
  \includegraphics[scale=0.3]{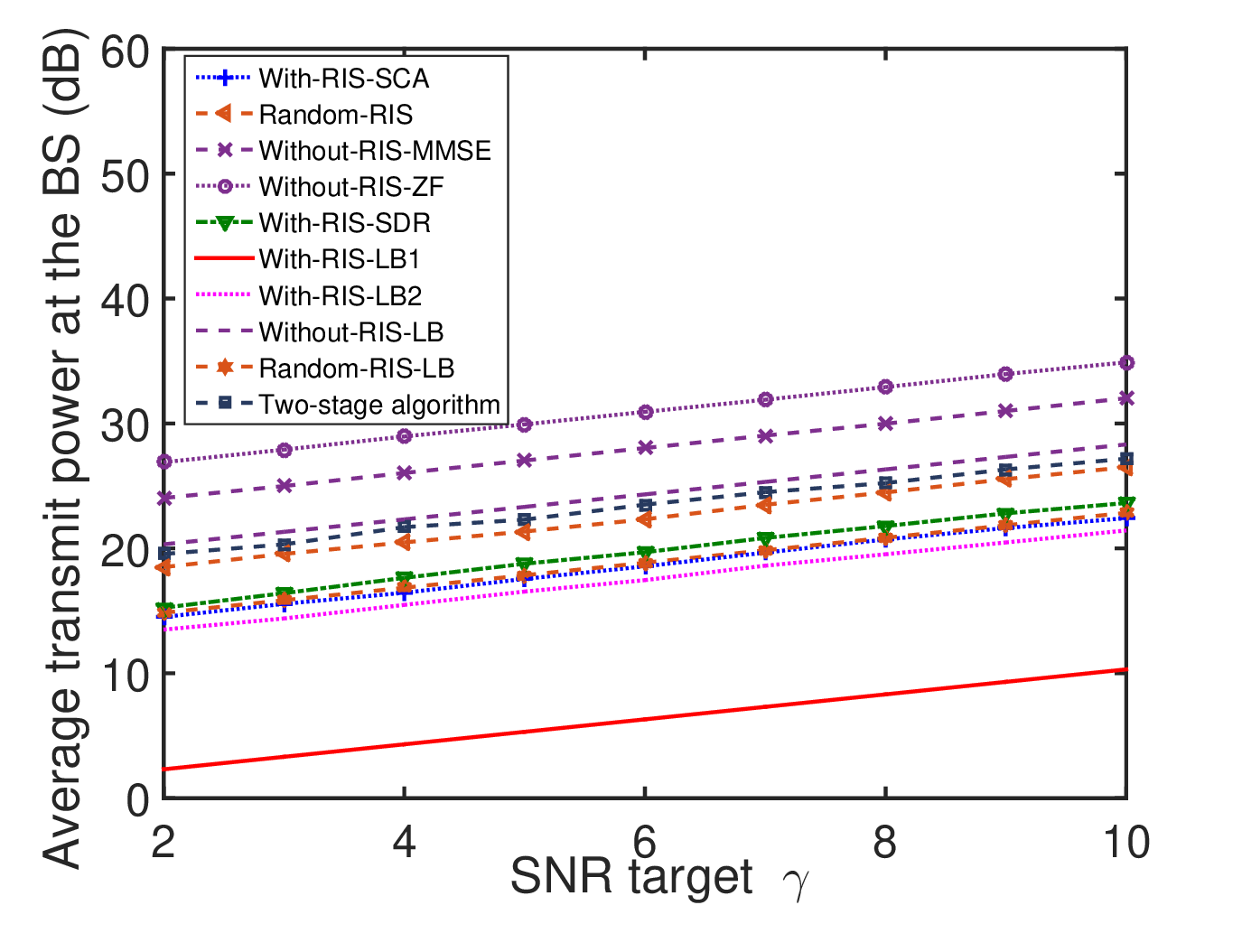}\\
  \caption{Average transmit power versus   SNR target   $\gamma$ for multi-MEs case.
  }\label{transmit-power-vs-gama-multiUE}
\end{minipage}
\end{figure}

\begin{figure}
\setlength{\belowcaptionskip}{-1cm}
\setlength{\abovecaptionskip}{-0.1cm}
  \centering
 \includegraphics[scale=0.33]{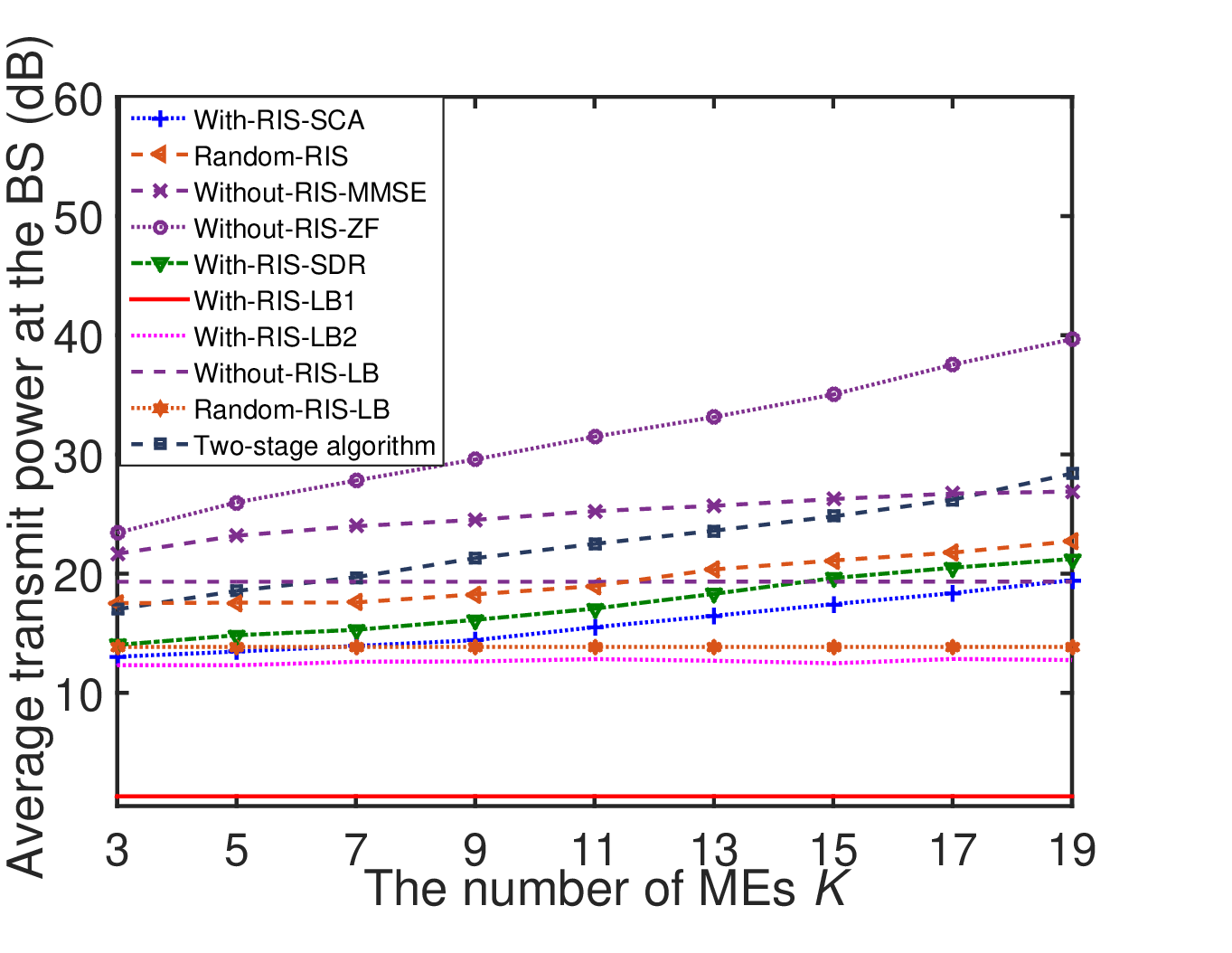}\\
  \caption{ Average transmit power versus  the number of MEs $K$. 
  }\label{transmit-power-vs-K}
\end{figure}

Figs.~\ref{transmit-power-vs-gama-singleUE} and~\ref{transmit-power-vs-gama-multiUE} show how average transmit power at the BS  changes with the SNR target $\gamma$ ranging from $2$ to $10$ under $M=20, N=70$ for $K=1$ and $K=5$, respectively. We can observe from the results that the average transmit power at the BS increases  almost linearly with the increase in the SNR target for single-ME and \mbox{multi-MEs} cases. We can also see that, when $\gamma$ ranges from $2$ to $10$, the  semi-analytical lower bound is  closer to the simulation results, compared to  the  \mbox{analytical} lower bound. 
{ \color{black} In addition, among these baselines, the two-stage algorithm with the  lower  complexity of $\mathcal{O}( (K+M^2)^{3.5}+ ((N+1)^2)^{3.5})$ achieves the same performance gain as our proposed alternating optimization algorithm based on SCA  and coincides with the  semi-analytical lower bound for  the single-ME case, while the transmit power at the BS of our proposed alternating optimization algorithms is significantly lower than that of the  two-stage algorithm for the multi-MEs case. The reason is that, the  performance of the two-stage algorithm  close to optimal. However, for the multi-MEs case, since the  phase shifts at the RIS are same for all MEs with different channel gain, the two-stage algorithm  cannot maximize the  combined channel power gain of different MEs  simultaneously~\cite{wu2018intelligentfull}.} 


Fig.~\ref{transmit-power-vs-K} shows how the average transmit power at the BS changes with the number of MEs for $M=20$, $\gamma=1$~dB, and $N=70$.  The results show that,  with the increase of the number of MEs, the average transmit power increases,  and is dramatically  lower than  the baselines. This means that \mbox{ RIS aided} systems   can  increase the energy efficiency.

\section{Conclusion} \label{sec:conclution}
In this paper, we have proposed  algorithms for power control problem at the BS with  QoS constraints in  RIS aided  wireless systems. Specifically, we have utilized  alternating optimization algorithms   to   jointly optimize  the transmit beamforming  at  the BS and the phase  shifts at  the RIS. Furthermore, we have derived  lower bounds for the average  transmit power.
Simulation results  have showed that   the average transmit power at the BS  is close to  the lower bound, and   is significantly lower  than that of communication systems without the RIS.

\bibliographystyle{IEEEtran}
\bibliography{ref}

\end{document}